\newcommand{\br}[1]{\mathopen{}\left( #1 \right)}
\newcommand{\brc}[1]{\mathopen{}\left\{ #1 \right\}}
\newcommand{\fl}[1]{\mathopen{}\left\lfloor #1 \right\rfloor}
\newcommand{\angl}[1]{\mathopen{}\langle #1 \rangle}
\newcommand{\diam}{\operatorname{diam}}
\newcommand{\OPT}{\texttt{OPT}}
\begin{document}
\title{Approximating the Average-Case Graph Search Problem with Non-Uniform Costs}
\titlerunning{Average-Case Graph Search Problem with Non-Uniform Costs}
%
\author{Michał Szyfelbein\orcidID{0009-0009-9894-9671}}
\authorrunning{M. Szyfelbein}
%
\institute{Gdańsk University of Technology, Gdańsk, Poland}
\maketitle              
\begin{abstract}
Consider the following generalization of the classic binary search problem: A searcher is required to find a hidden target vertex $x$ in a graph $G$. To do so, they iteratively perform queries to an oracle, each about a chosen vertex $v$. After each such call, the oracle responds whether the target was found and if not, the searcher receives as a reply the connected component in $G-v$ which contains $x$. Additionally, each vertex $v$ may have a different query cost $c(v)$ and a different weight $w(v)$. The goal is to find the optimal querying strategy which minimizes the weighted average-case cost required to find $x$. The problem is NP-hard even for uniform weights and query costs. Inspired by the progress on the edge query variant of the problem [SODA '17], we establish a connection between searching and vertex separation. By doing so, we provide an $O(\sqrt{\log n})$-approximation algorithm for general graphs and a $(4+\epsilon)$-approximation algorithm for the case when the input is a tree.

\keywords{Graph Searching, Binary Search, Decision Trees, Vertex Separators, Graph Theory, Approximation Algorithm}
\end{abstract}
\section{Introduction}
\label{sec:typesetting-summary}

Searching in graph structures is a fundamental problem in computer science, with applications ranging from machine learning to operations research. The input graph $G$ is assumed to contain a hidden \textit{target} element $x$ which the searcher is required to locate. In the classical setting, while exploring the graph, the searcher usually obtains only a local information about the placement of $x$. This principle underlines the well-known BFS and DFS strategies, which allow the searcher to locate the target when the only information received, while visiting a node, is whether it is the target. We study a more global search model in which the searcher is allowed to perform arbitrary \textit{queries}, each about a chosen vertex $v$. A \textit{response} to such query consists of information whether $v=x$, and if not, which of the connected components of $G-v$ contains $x$. When the input graph is a path, this is equivalent to binary searching in a linearly ordered set. In case of trees, the problem was studied among multiple names including: Tree Search Problem \cite{Cicalese2014ImprovedApproxAvgTs,Cicalese2016OnTSPwNonUniCost}, Search Trees on Trees \cite{SplayTonT,Fast_app_centroid_trees}, Hub Labeling \cite{Angelidakis2018ShortestPQ} and more. For general graphs the problem is known as Vertex Ranking \cite{RankingsofGraphs,Schaffer1989OptNodeRankOfTsInLinTime,OnakParys2006GenOfBSSInTsAndFLikePosets,Mozes_Onak2008FindOptTSStartInLinTime}, Elimination Trees \cite{Pothen1988OptimalEliminationTrees} and Hierarchical Clustering \cite{Acostfunctionforsimilaritybasedhierarchicalclustering,HCObjFsandAlgs,Approximatehierarchicalclusteringviasparsestcutandspreadingmetrics}.

In order to efficiently locate the target, the searcher needs a \textit{strategy} of searching which allows them to locate the target efficiently. The strategy is an \textit{adaptive} algorithm, which given previous responses in finite (and polynomial) time outputs the next query to be performed. We model this strategy as a rooted tree which we will call a \textit{decision tree} $D$, whose nodes are vertices of $G$. We will require that each edge outgoing from the root $r$ of $D$ is associated with a unique response to a query to $r$ in $G$ and that the same holds for all decision subtrees of $D-r$ (in the respective components of $G-v$). When searching according to $D$, the root $r$ of $D$ is queried first. If $r\neq x$ the searcher moves down along the edge of $r$ associated with the response, thereby entering the subtree $D_u$ rooted at the next queried vertex $u$. This process recurses until the target is found.

We will allow each vertex $v\in V\br{G}$ to have an arbitrary \textit{query cost} $c(v)$. 
Intuitively, $c(v)$ represents the number of time units required for a query to $v$ to return an answer.  We will also allow each vertex to have an arbitrary \textit{weight} $w\br{v}$, which denotes how frequently 
the vertex is searched for. For a target vertex $x \in V\br{G}$ and a decision tree $D$, let $Q_G\br{D,x}$ denote the sequence of queries performed 
along the unique path in $D$ from the root $r\br{D}$ to $x$. The \textit{weighted average-case cost} of $D$ on $G$ is defined as:
$$
c_G\br{D} = \sum_{x\in V\br{G}}w\br{x}\cdot \sum_{q\in Q_G\br{D, x}}c\br{q}.
$$

The \textit{Graph Search Problem} (GSP) is to find the decision tree which minimizes the above cost.

\subsection{Motivations and Applications}

Fast retrieval of information in graph structures is a well-studied problem, 
beginning with the seminal work of Knuth \cite{Knuth1973}. 
When the underlying search space provides non-local information about the target, 
the search process can be accelerated, since each query rules out a large set of 
possible target locations. The goal of designing search strategies is to find ways of exploiting this property as effectively as possible. 

Searching arises in various practical problems, but it can also be reformulated, to fit a wide range of real-life applications. Searching in graphs can be used to model a variety of problems, that may initially appear unrelated. These include: scheduling of parallel database join operations \cite{DereniowskiEfPQProcByGRank,OnMinERSTs,MinERSTrofTGs}, automated bug detection in computer code \cite{OptimalSinT,dereniowski2022CFApproxAlgForBSInTsWithMonoQTimes,dereniowski2024SInTsMonoQTs,szyfelbein2025searchingtreeskupmodularweight}, parallel Cholesky factorization of matrices \cite{Dereniowski2003CholeskyFactofMx}, VLSI-layouts \cite{OnAGPartWAppVLSI}, hierarchical clustering of data \cite{Acostfunctionforsimilaritybasedhierarchicalclustering,HCObjFsandAlgs,Approximatehierarchicalclusteringviasparsestcutandspreadingmetrics} and parallel assembly of multi-part products from their components \cite{ParAofModPs,Dereniowski2009ERankOfWTs}.

We focus on the average-case version of the problem rather than the worst-case, 
since it is natural to assume that the search strategies we design are intended 
to be used repeatedly. 
This motivates the introduction of a weight function, as some vertices may serve 
as targets more frequently than others. 
We also allow arbitrary query costs, since performing a query may require 
significant resources, such as time or money. To the best of our knowledge, this particular variant of the problem has not yet been investigated in the literature.

\subsection{Organization of the paper and our results}

In Section \ref{notionsAndPreliminaries}, we introduce all necessary notions, 
preliminaries, and formal definitions required for the analysis, including 
decision trees and their cost (Section \ref{definitionOfDecisionTree}), 
as well as cuts and separators (Section \ref{cutsAndSeparators}).

In Section \ref{serachingInTs}, we consider the case in which the input graph 
is a tree.
Using a pseudoexact FPTAS for the Weighted 
$\alpha$-Separator Problem, in Section \ref{HowToSearchInTs} we present a 
$\br{4+\epsilon}$-approximation algorithm running in $O\br{n^4/\epsilon^2}$ time.

In Section \ref{serachingInGs}, we shift our focus to general graphs. 
Using the $O\br{\sqrt{\log n}}$-approximation algorithm for the Min-Ratio 
Vertex Cut Problem from \cite{Improvedapproximationalgorithmsvertexseparators}, 
we obtain a polynomial-time $O\br{\sqrt{\log n}}$-approximation algorithm 
for the Graph Search Problem.

In Appendix \ref{relatedwork}, we provide an overview of the related work. In Appendix \ref{hardness}, we show that the problem is NP-hard even 
when restricted to trees with $\Delta\br{T}\leq 16$ and to trees with $\diam\br{T}\leq 8$. Appendix \ref{separatorFPTAS} shows how to obtain the aforementioned pseudoexact FPTAS for the Weighted $\alpha$-separator problem and Appendix \ref{omittedproofs} contains other omitted proofs.

\section{Notions and Preliminaries}\label{notionsAndPreliminaries}
We assume that every graph $G$ considered is simple and connected. 
By $uv \in E\br{G}$ we denote an edge connecting vertices $u$ and $v$ in $G$. 
Let $v \in V\br{G}$. By $G-v$ we denote the set of connected components 
resulting from deleting $v$ from $G$. 
For a set $S \subseteq V\br{G}$, $G-S$ denotes the set of 
connected components resulting from deleting all vertices in $S$ from $G$. 
For a family of subsets $\mathcal{F}$ of $V\br{G}$, we define 
$G-\mathcal{F} = G-\bigcup_{S \in \mathcal{F}} S$. The set of neighbors of a vertex $v \in V\br{G}$ is denoted by 
$N_G\br{v} = \brc{u \in V\br{G} \mid uv \in E\br{G}}$, 
and the set of neighbors of a subgraph $\mathcal{G}$ of $G$ is 
$N_G\br{\mathcal{G}} = \bigcup_{v \in V\br{\mathcal{G}}} N_G\br{v} - V\br{\mathcal{G}}$. For any function $f: V\br{G} \to \mathbb{N}$ and any set $S \subseteq V\br{G}$, 
we define $f\br{S} = \sum_{v \in S} f\br{v}$, 
and for a graph $G$, $f\br{G} = f\br{V\br{G}}$.  

Let $T$ be a tree. If $T$ is rooted, its root is denoted by $r\br{T}$. 
For any vertex $v \in V\br{T}$, let 
$\mathcal{C}_{T,v} = \brc{c_1, c_2, \dots, c_{\deg_{T,v}^+}}$ 
be the set of children of $v$. By $T_v$ we will denote the subtree of $T$ rooted at $v$ with all its descendants, and by $T_{v,i}$ we will denote the subtree of $T$ consisting of $v$ and $T_{c_1},\dots, T_{c_i}$.
For a subset $S \subseteq V\br{T}$, $T\angl{S}$ denotes the minimal connected 
subtree of $T$ containing all vertices in $S$.

A \textit{Graph Search Instance} consists of a triple 
$\br{G, c, w}$, where 
$c: V\br{G} \to \mathbb{N}$ is the cost of querying a vertex, and 
$w: V\br{G} \to \mathbb{N}$ is the weight of a vertex. During the \textit{Search Process}, the searcher is allowed to iteratively 
perform \textit{queries}, each asking about a chosen vertex $v$. 
After time $c(v)$, the query returns an answer. 
If the answer is affirmative, then $v$ is the target, otherwise, 
a unique connected component of $G-v$ containing the target $x$ is returned. 
Based on this information, the searcher narrows the subgraph of $G$ that 
might contain $x$. 
We call such subgraph a \textit{candidate} subgraph, and its vertices 
the \textit{candidate} set \footnote{Note that $G$ itself is always a candidate subgraph as well.}. 
This process continues until the position of the target is revealed.

\subsection{Decision trees}\label{definitionOfDecisionTree}

\begin{figure}[htbp]
    \centering
    \begin{minipage}{0.48\textwidth}
        \centering
        \begin{tikzpicture}[thick]
            \node (a) at (0.5,-1.5) {a};
            \node (b) at (0.75,0) {b};
            \node (c) at (1.5,-0.25) {c};
            \node (d) at (-1,-2) {d};
            \node (e) at (1,-3) {e};
            \node (f) at (-2,-1) {f};
            \node (g) at (1,-5.5) {g};
            \node (h) at (2,-2.25) {h};
            \node (i) at (-0.5,-3.5) {i};
            \node (j) at (0.75, -4.25) {j};
            \node (k) at (-1.5,-5) {k};
            \node (l) at (-0.75,-6) {l};
            
            \draw (a) -- (b);
            \draw (a) -- (c);
            \draw (a) -- (d);
            \draw (a) -- (e);
            \draw (b) -- (c);
            \draw (e) -- (j);
            \draw (e) -- (h);
            \draw (d) -- (i);
            \draw (d) -- (f);
            \draw (j) -- (g);
            \draw (i) -- (j);
            \draw (i) -- (e);
            \draw (i) -- (k);
            \draw (g) -- (l);
            \draw (h) -- (a);
            \draw (l) -- (k);
        \end{tikzpicture}
        \label{fig:graph}
    \end{minipage}
    \hfill
    \begin{minipage}{0.48\textwidth}
        \centering
        \begin{tikzpicture}[thick]
            \node (e) at (0,0) {$e$};
            \node (i) at (0,-1.5) {$i$};
            \node (a) at (-1.5,-3) {$a$};
            \node (g) at (1.5,-3) {$g$};
            \node (c) at (-2.5,-4.5) {$c$};
            \node (h) at (-1.5,-4.5) {$h$};
            \node (f) at (-0.5,-4.5) {$f$};
            \node (j) at (1,-4.5) {$j$};
            \node (k) at (2,-4.5) {$k$};
            \node (b) at (-3,-6) {$b$};
            \node (d) at (-0.5,-6) {$d$};
            \node (l) at (2,-6) {$l$};

            \draw[->] (e) -> (i);
            \draw[->] (i) -- (a);
            \draw[->] (i) -- (g);
            \draw[->] (a) -- (c);
            \draw[->] (a) -- (h);
            \draw[->] (a) -- (f);
            \draw[->] (c) -- (b);
            \draw[->] (f) -- (d);
            \draw[->] (g) -- (j);
            \draw[->] (g) -- (k);
            \draw[->] (k) -- (l);
        \end{tikzpicture}
    \end{minipage}
    \caption[Graph and decision trees for it]{Sample input graph (on the left) and a decision tree for it (on the right).}
    \label{fig:sample_decision_trees_for_graph}
\end{figure}
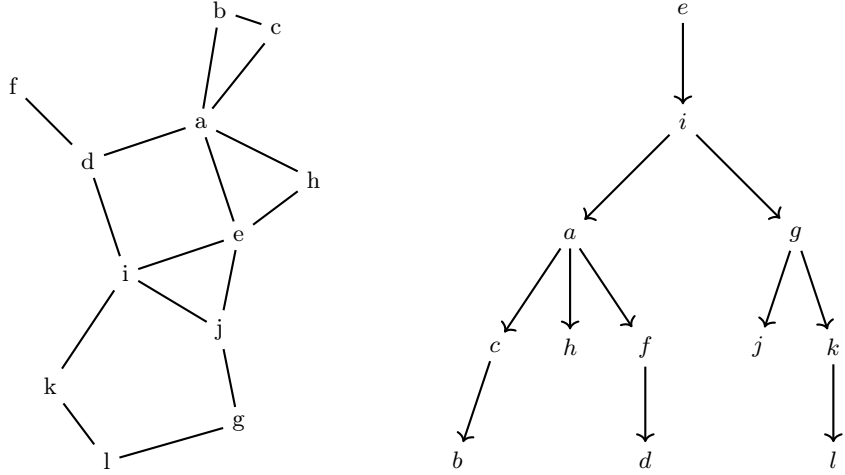

A decision tree is a pair $D = \br{V\br{D}, E\br{D}}$, where $V\br{D}=V\br{G}$ are vertices and $E\br{D}$ are edges of $D$. It is required that each child of $q\in V\br{
D}$ corresponds to a distinct response to the query at $q$, with respect to the subtree of candidate solutions remaining 
after performing all previous queries.

Let $Q_G\br{D,x}$ denote the sequence of queries made to locate $x \in V\br{G}$ using decision tree $D$. 
We define the \textit{Graph Search Problem} as follows:
\begin{tcolorbox}[colback=white, title=Graph Search Problem (GSP), fonttitle=\bfseries, breakable]
\textbf{Input:} Graph $G$, a query cost function $c:V\to \mathbb{N}$ and a weight function $w:V\to \mathbb{N}$.

\textbf{Output:} A decision tree $D$, minimizing the weighted average search cost: 
$$
c_G\br{D} = \sum_{x\in V\br{G}}w\br{x}\cdot \sum_{q\in Q\br{D, x}}c\br{q}.
$$
\end{tcolorbox}

If a given decision tree is optimal, we denote its cost by $\OPT\br{G}$. 
Let $D'$ be a subtree of a valid decision tree $D$ for $T$ containing $r\br{D}$. 
We say that $D'$ is a \textit{partial} decision tree for $D$, 
and we define its cost analogously to that of $D$, 
although we only count the queries belonging to $D'$.
We also introduce the following reinterpretation of the cost function, 
for each node $v \in D$, let $G_{D,v}$ be the subgraph of $G$ in which 
$v$ is queried when using $D$. 
Then, the contribution of $v$ to the total cost is $w\br{G_{D,v}} \cdot c\br{v}$, 
and therefore we obtain the following simple lemma:

\begin{lemma}\label{contributionLemma}
$$
c_G\br{D} = \sum_{v \in V\br{G}} w\br{G_{D,v}} \cdot c\br{v}.
$$  
\end{lemma}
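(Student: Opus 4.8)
The plan is to prove the identity by swapping the order of summation in the definition of $c_G\br{D}$, i.e. by a Fubini-style double count over pairs $\br{x,q}$. The definition adds up, over every candidate target $x$, its weight $w\br{x}$ multiplied by the total cost $\sum_{q\in Q_G\br{D,x}}c\br{q}$ of the queries made to locate it. Reading the same double sum column-by-column instead of row-by-row, I would group all contributions by the querying vertex $q$, so that each $q$ gets charged $c\br{q}$ once for every target $x$ whose search path passes through $q$. If I can show that this set of targets is exactly $V\br{G_{D,q}}$, the total charge to $q$ becomes $c\br{q}\cdot w\br{G_{D,q}}$, which is precisely the per-vertex contribution claimed in the lemma.

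First I would establish the combinatorial characterization
$$
q \in Q_G\br{D,x} \iff x \in V\br{G_{D,q}},
$$
that is, $q$ lies on the root-to-$x$ path in $D$ exactly when $x$ belongs to the candidate subgraph in which $q$ is queried. The bridge is the observation that the decision subtree $D_q$ rooted at $q$ is itself a valid decision tree for the candidate subgraph $G_{D,q}$; hence its node set is precisely $V\br{G_{D,q}}$, each such vertex occurring exactly once. Granting this, $x\in V\br{G_{D,q}}$ holds iff $x$ is a node of $D_q$, which holds iff $q$ is an ancestor of $x$ in $D$ (or $q=x$), i.e. iff $q$ appears on the unique path $Q_G\br{D,x}$.

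With the equivalence in hand the remaining computation is routine:
$$
c_G\br{D} = \sum_{x\in V\br{G}} w\br{x}\sum_{q\in Q_G\br{D,x}} c\br{q}
= \sum_{q\in V\br{G}} c\br{q}\sum_{x\in V\br{G_{D,q}}} w\br{x}
= \sum_{q\in V\br{G}} c\br{q}\cdot w\br{G_{D,q}},
$$
using the preliminaries' convention $w\br{G_{D,q}}=\sum_{x\in V\br{G_{D,q}}}w\br{x}$ and then renaming $q$ as $v$. The only genuine obstacle is justifying $V\br{D_q}=V\br{G_{D,q}}$, i.e. that the decision subtree at $q$ searches exactly the vertices of its candidate subgraph. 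This follows from the validity condition on decision trees: the children of $q$ correspond to the distinct connected components of $G_{D,q}-q$, which partition $V\br{G_{D,q}}\setminus\brc{q}$, and recursing shows that every vertex of $G_{D,q}$ is queried exactly once in $D_q$ while no foreign vertex appears. I expect this to reduce to a short induction on the height of $D_q$, after which the summation swap above is immediate.
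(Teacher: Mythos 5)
Your proof is correct and matches the paper's (implicit) argument: the paper states this lemma as an immediate consequence of reinterpreting the cost so that each query $v$ is charged $c\br{v}$ once per target in its candidate subgraph $G_{D,v}$, which is exactly the summation swap and the equivalence $q \in Q_G\br{D,x} \iff x \in V\br{G_{D,q}}$ that you formalize. Your additional justification that $V\br{D_q} = V\br{G_{D,q}}$ is a sound way to pin down the only nontrivial step the paper leaves unstated.
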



\subsection{Cuts and separators}\label{cutsAndSeparators}
To obtain a tight lower bound on the cost of our solution, 
we establish a connection between the Graph Search Problem and the following 
vertex separator problems. 
We define the \textit{Weighted $\alpha$-Separator Problem} as follows:

\begin{tcolorbox}[colback=white, title=Weighted $\alpha$-Separator Problem, fonttitle=\bfseries, breakable]
\textbf{Input:} Graph $G$, a cost function $c:V\to \mathbb{N}$, a weight function $w:V\to \mathbb{N}$ and a real number $\alpha$.

\textbf{Output:} A set $S\subseteq V\br{G}$ called \textit{$\alpha$-separator}, such that for every $H\in G-S$, $w\br{H}\leq w\br{G}/\alpha$ and $c\br{S}$ is minimized.
\end{tcolorbox}
We also define the Min-Ratio Vertex Cut Problem as follows:

\begin{tcolorbox}[colback=white, title= Min-Ratio Vertex Cut Problem, fonttitle=\bfseries, breakable]
\textbf{Input:} Graph $G=\br{V\br{G}, E\br{G}}$, the cost function $c:V\to \mathbb{N}$ and the weight function $w:V\to \mathbb{N}$.

\textbf{Output:} A partition $\br{A,S,B}$ of $V\br{G}$ called \textit{vertex-cut}, such that there are no $u\in A$ and $v\in B$ for which $uv\in E\br{G}$, minimizing the ratio:
$$
\alpha_{c,w}\br{A,S,B}=\frac{c\br{S}}{w\br{A\cup S}\cdot w\br{B\cup S}}.
$$
\end{tcolorbox}
\subsection{Levels of $\OPT$ and basic bounds}\label{levelsOfOptAndBasicBounds}

We begin with additional notation. 
For any graph $G$ and decision tree $D$, denote by 
$\mathcal{R}_D\br{G} = \brc{V\br{G_{D,v}} | v \in V\br{G}}$ 
the family of all candidate subsets of $D$ in $G$. 

Let $D^*$ be an arbitrary decision tree for the Graph Search Problem such that 
$c_G\br{D^*} = \OPT\br{G}$. 
We denote by $\mathcal{L}_{k}^*$ the subfamily of $\mathcal{R}_{D^*}\br{G}$ 
consisting of all maximal elements $H$ of $\mathcal{R}_{D^*}\br{G}$ with $w\br{H} \leq k$, 
that is, if some superset $H'$ of $H$ belongs to $\mathcal{R}_{D^*}\br{G}$, 
then $w\br{H'} > k$. 
We call such a set the $k$-th \textit{level} of $\OPT\br{G}$. 
Let $S_{k}^* = V\br{G} - \mathcal{L}_{k}^*$. 
These are the vertices belonging to the separator at level $\mathcal{L}_{k}^*$. 

Notice that $S_{k}^*$ forms a Weighted $w\br{G}/k$-separator of $G$. 
Furthermore, for any $H_1, H_2 \in \mathcal{R}_D\br{G}$, we have 
$H_1 \cup H_2 \neq \emptyset$ if and only if $H_1 \subseteq H_2$ or 
$H_2 \subseteq H_1$, so $\mathcal{R}_D\br{G}$ is laminar. 
Therefore, for any $k_1 \neq k_2$, we have 
$\mathcal{L}_{k_1}^* \cap \mathcal{L}_{k_2}^* = \emptyset$.
\begin{lemma}
                $$\OPT\br{G}=\sum_{k=0}^{w\br{G}-1}c\br{S_{k}^*}.$$
            \end{lemma}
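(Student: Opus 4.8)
The plan is to establish the identity by a double-counting argument: I will rewrite the right-hand side as a sum over vertices and then match it, term by term, against the formula for $\OPT\br{G}$ provided by Lemma~\ref{contributionLemma}. The first step is to expand each separator cost and interchange the order of summation,
$$
\sum_{k=0}^{w\br{G}-1} c\br{S_k^*} \;=\; \sum_{k=0}^{w\br{G}-1}\ \sum_{v\in S_k^*} c\br{v} \;=\; \sum_{v\in V\br{G}} c\br{v}\cdot\spr{\brc{k : 0\le k\le w\br{G}-1,\ v\in S_k^*}},
$$
so that it suffices to prove that, for every vertex $v$, the number of levels $k$ with $v\in S_k^*$ equals $w\br{G_{D^*,v}}$.

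The heart of the argument is a characterization of the indices $k$ for which $v\in S_k^*$. Since $\mathcal{R}_{D^*}\br{G}$ is laminar, the candidate sets containing a fixed vertex $v$ form a chain, and I would argue that its minimal element is exactly $V\br{G_{D^*,v}}$: this is the candidate subgraph in which $v$ is queried, and after that query the target moves into some component of $G_{D^*,v}-v$, so no strictly smaller candidate set can contain $v$. Consequently, a candidate set containing $v$ of weight at most $k$ exists if and only if $w\br{G_{D^*,v}}\le k$; and whenever it does, the largest such set in the chain is a maximal weight-$\le k$ candidate set and hence belongs to $\mathcal{L}_k^*$. Recalling that $S_k^* = V\br{G}-\bigcup_{H\in\mathcal{L}_k^*}H$, this yields the clean threshold $v\in S_k^*$ if and only if $w\br{G_{D^*,v}} > k$.

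From the threshold it follows that $v\in S_k^*$ holds for precisely the indices $k\in\brc{0,1,\dots,w\br{G_{D^*,v}}-1}$; since $G_{D^*,v}$ is a subgraph of $G$ we have $w\br{G_{D^*,v}}\le w\br{G}$, so all of these indices are in range and their count is exactly $w\br{G_{D^*,v}}$. Substituting this back into the rearranged sum and invoking Lemma~\ref{contributionLemma} gives
$$
\sum_{k=0}^{w\br{G}-1} c\br{S_k^*} \;=\; \sum_{v\in V\br{G}} w\br{G_{D^*,v}}\cdot c\br{v} \;=\; c_G\br{D^*} \;=\; \OPT\br{G},
$$
which is the claim. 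I expect the only real subtlety to be the second step: cleanly justifying, via laminarity, both that $V\br{G_{D^*,v}}$ is the minimal candidate set containing $v$ and that the maximal weight-$\le k$ candidate set supplied by the chain is genuinely maximal within all of $\mathcal{R}_{D^*}\br{G}$; once this threshold is in hand, the remaining steps are routine rearrangement.
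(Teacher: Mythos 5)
Your proof is correct and takes essentially the same approach as the paper: characterize membership $v\in S_{k}^*$ by the threshold $k < w\br{G_{D^*,v}}$, interchange the order of summation, and conclude via Lemma~\ref{contributionLemma}. The only difference is that you explicitly verify the converse direction (that $v\notin S_{k}^*$ once $k\geq w\br{G_{D^*,v}}$, using laminarity and maximality of the chain element), a detail the paper's proof leaves implicit.
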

            \begin{proof}
                Consider any vertex $v$. For every $0\leq k<w\br{G_{D^*,v}}$, $v\notin \bigcup_{H\in \mathcal{L}_{k}^*}H$, so $v\in S_{k}^*$ and the contribution of $v$ to the cost is $w\br{G_{D^*,v}}\cdot c\br{v}$:
                $$\sum_{k=0}^{w\br{G}-1}c\br{S_{k}^*}=\sum_{v\in V\br{G}}\sum_{k=0}^{w\br{G_{D^*,v}}-1}c\br{v}=\OPT\br{G}$$
                
                where the second equality is by Lemma \ref{contributionLemma}.
            \end{proof}
            
Using the above lemma one easily obtains the following lower bound on the cost of the optimal solution:
\begin{lemma}\label{lb_opt}
            $$
            2\cdot\OPT\br{G}= 2\cdot\sum_{k=0}^{w\br{T}-1}c\br{S_{k}^*} \geq \sum_{k=0}^{w\br{T}}c\br{S_{\fl{k/2}}^*}.
            $$
\end{lemma}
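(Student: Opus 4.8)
The plan is to dispose of the equality immediately and then reduce the inequality to a simple counting argument over the summation indices. The equality $2\cdot\OPT\br{G} = 2\cdot\sum_{k=0}^{w\br{T}-1} c\br{S_k^*}$ is nothing more than the preceding lemma multiplied by two, so no work is required there; everything reduces to establishing
$$
\sum_{k=0}^{w\br{T}} c\br{S_{\fl{k/2}}^*} \leq 2\cdot\sum_{k=0}^{w\br{T}-1} c\br{S_k^*}.
$$

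For the inequality I would reindex the left-hand sum by the value $j = \fl{k/2}$. As $k$ ranges over $\brc{0,1,\dots,w\br{T}}$, each index $j$ is produced by at most the two values $k=2j$ and $k=2j+1$, and the largest index that can occur is $\fl{w\br{T}/2}$. Hence every term $c\br{S_j^*}$ appears on the left with coefficient at most $2$, and only for $j \leq \fl{w\br{T}/2}$, which gives
$$
\sum_{k=0}^{w\br{T}} c\br{S_{\fl{k/2}}^*} \leq \sum_{j=0}^{\fl{w\br{T}/2}} 2\,c\br{S_j^*}.
$$

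To finish I would use two facts. First, $\fl{w\br{T}/2} \leq w\br{T}-1$ whenever $w\br{T}\geq 1$, so the index range on the right is contained in $\brc{0,\dots,w\br{T}-1}$. Second, since $c$ maps into $\mathbb{N}$, every $c\br{S_j^*}\geq 0$, so extending the summation up to $j=w\br{T}-1$ only adds non-negative terms:
$$
\sum_{j=0}^{\fl{w\br{T}/2}} 2\,c\br{S_j^*} \leq \sum_{j=0}^{w\br{T}-1} 2\,c\br{S_j^*} = 2\cdot\OPT\br{G}.
$$
Chaining these three displays yields the claim.

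The argument is elementary, so there is no genuine obstacle; the only point requiring care is the parity bookkeeping at the top endpoint $k=w\br{T}$, where for even $w\br{T}$ the index $j=w\br{T}/2$ is hit only once rather than twice. Bounding its multiplicity by $2$ is an overcount that keeps the inequality valid, and I would state explicitly that the proof relies only on non-negativity of the separator costs $c\br{S_k^*}$ and never on any monotonicity of the sets $S_k^*$.
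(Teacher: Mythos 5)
Your proof is correct and matches the paper's (implicit) argument: the paper states Lemma \ref{lb_opt} as an immediate consequence of the preceding identity $\OPT\br{G}=\sum_{k=0}^{w\br{G}-1}c\br{S_{k}^*}$, and the reindexing $j=\fl{k/2}$ with multiplicity at most $2$, together with non-negativity of the costs, is exactly the intended easy verification. Your explicit handling of the parity at the endpoint $k=w\br{T}$ and the remark that no monotonicity of the $S_k^*$ is needed are careful touches, but the route is the same.
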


We also have the following upper bound:
\begin{lemma}\label{splitting}
    Let $\mathcal{G}$ be any subgraph of $G$ and $0\leq\beta\leq 1$. Then: 
            $$
           \beta\cdot w\br{\mathcal{G}}\cdot c\br{S_{\fl{w\br{\mathcal{G}}/2}}^*\cap \mathcal{G}}
            \leq \sum_{k=\br{1-\beta}w\br{\mathcal{G}}+1}^{w\br{\mathcal{G}}}c\br{S_{\fl{k/2}}^*\cap \mathcal{G}}.
            $$
\end{lemma}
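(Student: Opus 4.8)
The plan is to bound every summand on the right-hand side from below by the single cost $c\br{S_{\fl{w\br{\mathcal{G}}/2}}^*\cap\mathcal{G}}$ appearing on the left, and then to observe that the sum ranges over $\beta\cdot w\br{\mathcal{G}}$ indices; multiplying the common lower bound by the number of terms reproduces the left-hand side. Writing $W=w\br{\mathcal{G}}$ for brevity, the statement thus reduces to a monotonicity fact about the costs $c\br{S_j^*\cap\mathcal{G}}$ together with a count of the summation range.

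The one structural fact I need is that $c\br{S_j^*\cap\mathcal{G}}$ is non-increasing in $j$. This is immediate from the characterization already used in the proof of the level decomposition: a vertex $v$ lies in $S_j^*$ precisely when $w\br{G_{D^*,v}}>j$, since $G_{D^*,v}$ is the smallest candidate set containing $v$ and, by laminarity of $\mathcal{R}_{D^*}\br{G}$, the vertex $v$ avoids every maximal candidate set of weight at most $j$ exactly when even this smallest set is too heavy. Consequently $j\le j'$ forces $S_{j'}^*\subseteq S_j^*$, hence $S_{j'}^*\cap\mathcal{G}\subseteq S_j^*\cap\mathcal{G}$, and nonnegativity of $c$ gives $c\br{S_j^*\cap\mathcal{G}}\ge c\br{S_{j'}^*\cap\mathcal{G}}$.

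With this in hand the argument proceeds as follows. Every index $k$ in the summation satisfies $k\le W$, so $\fl{k/2}\le \fl{W/2}$, and the monotonicity above yields $c\br{S_{\fl{k/2}}^*\cap\mathcal{G}}\ge c\br{S_{\fl{W/2}}^*\cap\mathcal{G}}$ for each such $k$. Since $k$ runs over the $\beta W$ values from $\br{1-\beta}W+1$ up to $W$, summing these identical lower bounds gives
$$
\sum_{k=\br{1-\beta}W+1}^{W}c\br{S_{\fl{k/2}}^*\cap\mathcal{G}}\ \ge\ \beta W\cdot c\br{S_{\fl{W/2}}^*\cap\mathcal{G}},
$$
which is exactly the assertion.

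The argument is short, and the only genuinely delicate point is the bookkeeping of the summation range: one must confirm that the block of indices $\br{1-\beta}W+1,\dots,W$ really contributes $\beta W$ terms and that the floor $\fl{k/2}$ never pushes an index above $\fl{W/2}$. Both are routine, but they are where an off-by-one or an integrality slip (when $\beta W$ is not an integer) could do real damage; the monotonicity step, by contrast, is forced once the description of $S_j^*$ in terms of $w\br{G_{D^*,v}}$ is recalled, so I expect the term count — not the monotonicity — to be the part requiring the most care.
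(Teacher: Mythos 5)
Your proof is correct and takes essentially the same approach as the paper: the paper's entire proof is the one-sentence observation that as $k$ decreases more vertices belong to the separator, which is precisely the monotonicity $S_{j'}^*\subseteq S_j^*$ for $j\le j'$ that you derive (in more detail, via $v\in S_j^*\iff w(G_{D^*,v})>j$) before multiplying by the number of terms. The integrality caveat you flag when $\beta\cdot w(\mathcal{G})$ is not an integer is likewise glossed over in the paper's statement and proof, so it is not a defect of your argument relative to the paper's.
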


    \begin{proof}
        The inequality is due to the fact that as $k$ decreases, more vertices belong to the separator. 
    \end{proof}
    
We will also make use of the following lemma, which allows us to attach decision trees 
below already constructed partial decision trees without ambiguity:
\begin{lemma}\label{neighborsPathLemma}
     Let $\mathcal{G}$ be a connected subgraph of a graph $G$ and $D$ be a partial decision tree for $G$ having no queries to vertices of $\mathcal{G}$, but containing at least one query to $N_{G}\br{V\br{\mathcal{G}}}$. Additionally, let $Q$ be the set of all queries to vertices from $N_{G}\br{V\br{\mathcal{G}}}$ in $D$. Then $D\angl{Q}$ forms a path in $D$. 
\end{lemma}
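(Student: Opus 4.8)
The plan is to exhibit a single root-to-leaf path in $D$ that contains every vertex of $Q$; once this is done, the minimal connected subtree $D\angl{Q}$ is automatically a sub-path of it, which is exactly what we must prove. The central object is the set of nodes through which the whole blob $\mathcal{G}$ travels as we descend $D$. Call a node $q\in V\br{D}$ \emph{$\mathcal{G}$-carrying} if $V\br{\mathcal{G}}\subseteq V\br{G_{D,q}}$, i.e.\ all of $\mathcal{G}$ lies in the candidate subgraph in which $q$ is queried. The root $r\br{D}$ is $\mathcal{G}$-carrying, since its candidate subgraph is all of $G$.

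First I would establish the invariant that the $\mathcal{G}$-carrying nodes form a single path descending from $r\br{D}$. Whenever a $\mathcal{G}$-carrying node $q$ is queried, we have $q\notin V\br{\mathcal{G}}$ by hypothesis (since $D$ has no queries to vertices of $\mathcal{G}$), and $\mathcal{G}$ is connected; hence deleting $q$ leaves all of $\mathcal{G}$ inside one connected component of $G_{D,q}-q$. Consequently exactly one child of $q$ (the one whose candidate subgraph is that component) can be $\mathcal{G}$-carrying, so inductively these nodes form a single downward path $P$ starting at $r\br{D}$.

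Next I would show $Q\subseteq V\br{P}$. Fix $v\in Q$, so $v\in N_G\br{V\br{\mathcal{G}}}$, meaning $v\notin V\br{\mathcal{G}}$ and $v$ is adjacent to some $u\in V\br{\mathcal{G}}$. Along the path in $D$ from $r\br{D}$ down to $v$, every queried vertex is distinct from $v$ and lies outside $V\br{\mathcal{G}}$; in particular none of them equals $u$ or $v$, so the edge $uv$ is never removed. I would then argue by induction on depth that every candidate subgraph on the way to $v$ contains $v$ together with all of $\mathcal{G}$, kept connected: if a candidate subgraph contains $v$ and all of $\mathcal{G}$ connectedly, then querying the next ancestor $q$ (with $q\neq v$ and $q\notin V\br{\mathcal{G}}$) preserves the surviving edge $uv$ and the internal connectivity of $\mathcal{G}$, so $v$ and all of $\mathcal{G}$ descend together into the same component toward $v$. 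In particular $G_{D,v}$ contains all of $\mathcal{G}$, so $v$ is $\mathcal{G}$-carrying, i.e.\ $v\in V\br{P}$.

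Finally, since $Q\subseteq V\br{P}$ and $P$ is a path, the minimal connected subtree $D\angl{Q}$ is the sub-path of $P$ spanning from the shallowest to the deepest element of $Q$, which is a path, as required. The main obstacle is the middle step: ruling out that an earlier query could split a neighbor $v$ of $\mathcal{G}$ off into a different component than $\mathcal{G}$. This is precisely where the hypothesis that no vertex of $\mathcal{G}$ is ever queried does the work, since it guarantees the single connecting edge $uv$ always survives and keeps $v$ tethered to the connected set $\mathcal{G}$ throughout the descent.
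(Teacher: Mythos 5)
Your proof is correct and rests on essentially the same idea as the paper's: since $\mathcal{G}$ is connected and never queried, every query leaves all of $\mathcal{G}$ (and any still-present neighbor, tethered by an edge like $uv$) inside a single component, so the queries of $Q$ are forced onto one descending path of $D$. Your ``$\mathcal{G}$-carrying path'' invariant is just a global repackaging of the paper's local per-node case analysis (queries outside $V(\mathcal{G})\cup N_G(V(\mathcal{G}))$ versus queries in $N_G(V(\mathcal{G}))$), each node having at most one child whose subtree can contain further elements of $Q$.
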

    \begin{proof}
        See Appendix \ref{proof_neighborsPathLemma}.
    \end{proof}
    
The above lemma will become useful in the following scenario: 
Let $D$ be a partial decision tree for $G$ satisfying the conditions of the lemma and $D_{\mathcal{G}}$ be any partial decision tree for a subgraph $\mathcal{G}$ of $G$. 
Since the queries in $Q$ form a path, without ambiguity, we can attach $D_{\mathcal{G}}$ to $D$ 
below the last query of $Q$, thereby obtaining a valid partial decision tree for $G$.

\section{Searching in Trees}\label{serachingInTs}

In this section, we present a $\br{4+\epsilon}$-approximation algorithm 
for the case where the input graph is a tree. 
To achieve this, we establish a connection between searching in trees and 
the Weighted $\alpha$-Separator Problem. 
This connection provides a lower-bounding scheme for our recursive algorithm, 
which at each level of recursion, constructs a decision tree using the 
$\alpha$-separator obtained by the following procedure:

\begin{theorem}\label{bicriteriaFPTAS}
    Let $S$ be an optimal weighted $\alpha$-separator for $\br{T,c,w,\alpha}$. For any $\delta>0$ there exists an algorithm \FSeparatorFPTAS, which returns a separator $S'$, such that:
    \begin{enumerate}
        \item $c\br{S'}\leq c\br{S}$.
        \item $w\br{H}\leq \frac{\br{1+\delta}\cdot w\br{T}}{\alpha}$ for every $H\in T-S'$.
        \item The algorithm runs in $O\br{n^3/\delta^2}$ time.
    \end{enumerate}
\end{theorem}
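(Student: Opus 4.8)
The plan is to design a dynamic-programming FPTAS for the Weighted $\alpha$-Separator Problem on trees, exploiting the tree structure to build the separator in a bottom-up fashion while rounding weights to keep the state space polynomial. I would root the tree arbitrarily and process vertices in post-order. The key observation is that the constraint ``every component $H$ of $T-S'$ satisfies $w\br{H}\leq w\br{T}/\alpha$'' is local in the following sense: once a separator vertex is chosen, it disconnects its subtree from the rest, so the admissibility of a partial solution in $T_v$ depends only on the weight of the connected fragment that currently ``hangs below'' $v$ (the part of $v$'s component not yet cut off). This suggests a DP whose state is $\br{v, i, \omega}$, where $v$ is the current vertex, $i$ indexes which children $c_1,\dots,c_i$ of $v$ have been incorporated (so we work with the subtrees $T_{v,i}$ already defined in the preliminaries), and $\omega$ is the weight of the fragment containing $v$ in the current partial separation.

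\begin{proof}[Proof plan]
First I would set up the DP table $A\br{v,i,\omega}$ recording the minimum cost $c\br{S'\cap T_{v,i}}$ over all partial separators of $T_{v,i}$ such that every fully-separated component has weight at most $w\br{T}/\alpha$ and the component containing $v$ has weight exactly $\omega$. Processing child $c_{i+1}$ gives a recurrence with two cases: either $c_{i+1}$ is placed into the separator (its subtree is then independently required to be admissible on its own, contributing $\min_{\omega'} A\br{c_{i+1},\deg^+,\omega'} + c\br{c_{i+1}}$ and leaving $v$'s fragment weight unchanged), or $c_{i+1}$ is not separated, in which case the fragment of $v$ merges with the fragment of $c_{i+1}$ and the combined weight $\omega + \omega'$ must satisfy the admissibility bound unless it is later cut. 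Transferring the fragment across $v$ (the transition from $A\br{v,\deg^+,\cdot}$ back up to $v$'s parent) enforces the weight constraint exactly at the moment the component is finalized. The base case is a leaf, which either lies in $S'$ (cost $c\br{v}$, fragment weight $0$) or not (cost $0$, fragment weight $w\br{v}$).

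The bottleneck is that $\omega$ ranges over $\brc{0,1,\dots,w\br{T}}$, which is only pseudopolynomial. To obtain a genuine FPTAS I would discretize the weight coordinate: replace the exact fragment weight by a bucketed value, using $O\br{n/\delta}$ buckets whose widths grow geometrically (or are scaled so that the relative rounding error per merge is at most a $\br{1+\delta}^{1/n}$ factor). Because each root-to-leaf accumulation of a fragment passes through at most $n$ merge operations before the component is cut off, the multiplicative errors compound to at most $\br{1+\delta}^{1/n}\!{}^{\,n}\le 1+\delta$ in the final verification of $w\br{H}\leq w\br{T}/\alpha$. This is exactly what yields property~(2): the cost bound is respected exactly (we never relax the cost objective, only the weight test), so the returned $S'$ satisfies $c\br{S'}\le c\br{S}$, giving property~(1), while every finalized component may exceed $w\br{T}/\alpha$ by at most the accumulated factor $1+\delta$.

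\textbf{The main obstacle} I anticipate is controlling precisely where the rounding error accrues so that property~(1) holds with no loss. The subtlety is that the optimal separator $S$ is feasible for the \emph{exact} problem, but after bucketing, a fragment that was just barely admissible under $S$ might be rounded to a bucket that reads as slightly over the threshold; I must argue that the rounded instance still admits a solution of cost at most $c\br{S}$ whose components are within the $\br{1+\delta}$-relaxed bound. The clean way to do this is to show that $S$ itself, interpreted in the bucketed DP, remains a valid choice whose every component weight, after rounding, stays below $\br{1+\delta}w\br{T}/\alpha$ --- so the DP's minimum cost is at most $c\br{S}$. For the running time, the table has $O\br{n}$ vertex-child pairs times $O\br{n/\delta}$ buckets of states, and each transition convolves two bucketed fragment-weight distributions in $O\br{\br{n/\delta}^2}$ time per merge; summed over the tree this gives the claimed $O\br{n^3/\delta^2}$ bound, completing property~(3).
\end{proof}
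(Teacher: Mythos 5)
Your DP is exactly the paper's: your table $A\br{v,i,\omega}$ is the paper's $C_{v,i}^{out}\br{w}$ (with the ``$v$ in the separator'' case kept as a separate state $C_v^{in}$), and this exact recursion is what the paper shows runs in $O\br{n\cdot\br{w\br{T}/\alpha}^2}$ time. The gap is in how you make the state space polynomial. You propose geometric buckets with per-merge ratio $\br{1+\delta}^{1/n}$ and claim $O\br{n/\delta}$ buckets; that count is wrong. To cover fragment weights up to $w\br{T}$ with a geometric grid of ratio $1+\Theta\br{\delta/n}$ you need $\Theta\br{\br{n/\delta}\cdot\log w\br{T}}$ buckets, and since weights are arbitrary integers given in binary, $\log w\br{T}$ is not bounded in terms of $n$ and $\delta$; each merge then costs $O\br{\br{\br{n/\delta}\log w\br{T}}^2}$ and the total exceeds the claimed $O\br{n^3/\delta^2}$. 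A second, smaller problem: to get property (1) you relax the DP's weight test to $\br{1+\delta}\cdot w\br{T}/\alpha$ so that the rounded image of $S$ passes it, but then the components of the returned $S'$ satisfy only the \emph{relaxed} bound in \emph{rounded} weights, and converting back to true weights costs another rounding error; as written this yields $1+2\delta$ (or $\br{1+\delta}^2$) rather than $1+\delta$. That is fixable by halving $\delta$, but it shows the error is being paid twice.

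Both issues disappear if the rounding is additive and applied to the input rather than to the DP states, which is what the paper does: set $K=\delta\cdot w\br{T}/\br{n\cdot\alpha}$, round each vertex weight down to $w'\br{v}=\fl{w\br{v}/K}$, rescale the threshold (choosing $\alpha'$ so that the rounded threshold equals the original one divided by $K$), and run the exact pseudopolynomial DP on $\br{T,c,w',\alpha'}$. Fragment weights in the DP are then integers at most $w'\br{T}/\alpha'=n/\delta$, so the exact DP runs in $O\br{n\cdot\br{n/\delta}^2}=O\br{n^3/\delta^2}$ time. Because the rounding is downward, every component of $T-S$ can only lose weight, so $S$ remains feasible for the rounded instance with the \emph{unrelaxed} test --- this resolves your ``main obstacle'' with no loss and gives $c\br{S'}\leq c\br{S}$ directly. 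The $\br{1+\delta}$ slack is paid only once, when translating the rounded component weights of $S'$ back to true weights: each of the at most $n$ vertices in a component contributes additive error less than $K$, so $w\br{H}\leq w\br{T}/\alpha+n\cdot K=\br{1+\delta}\cdot w\br{T}/\alpha$.
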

\begin{proof}
    The algorithm and the proof are deferred to the Appendix \ref{separatorFPTAS}.
\end{proof}

\subsection{How to search in trees}\label{HowToSearchInTs}
Below, we show how to use the $\FSeparatorFPTAS$ procedure to construct a solution 
for the Tree Search Problem. 
At each level of the recursion, the algorithm greedily finds an (almost) optimal 
weighted $\alpha$-separator of $T$, denoted $S_T$, and then builds an arbitrary 
decision tree $D_T$ using the vertices in $S_T$ (which can be done in $O\br{n^2}$ time). 

Next, for each $H \in T-S_T$, the procedure is called recursively, and each resulting 
decision tree $D_H$ is attached below the appropriate query in $D_T$. The resulting decision tree is then returned by the procedure.
    \begin{theorem}
        For any $\epsilon>0$, there exists $\br{4+\epsilon}$-approximation algorithm for the Tree Search Problem running in time $O\br{n^4/\epsilon^2}$.
    \end{theorem}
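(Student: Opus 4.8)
The plan is to analyze the recursive algorithm described above, which uses \FSeparatorFPTAS\ to find an approximate $\alpha$-separator $S_T$ at each level, builds an arbitrary decision tree $D_T$ on those separator vertices, and recurses into each component $H \in T - S_T$. I would first fix the choice of $\alpha$ — taking $\alpha = 2$ so that each component $H$ satisfies $w(H) \le (1+\delta)\cdot w(T)/2$, i.e. roughly half the weight — and set $\delta$ appropriately in terms of $\epsilon$ at the end. The key idea is to charge the cost of the constructed tree against the telescoping lower bound from Lemma~\ref{lb_opt}, namely $2\cdot\OPT(G) \ge \sum_{k=0}^{w(T)} c(S^*_{\fl{k/2}})$, so I need to compare the separators my algorithm uses against the optimal levels $S^*_k$.

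The main structural step is a charging argument by recursion depth or by weight. At the top level, the algorithm pays $c(S_T)$ times the total weight $w(T)$ (since every vertex is still a candidate when $D_T$ is queried). By property~1 of Theorem~\ref{bicriteriaFPTAS}, $c(S_T) \le c(S)$ where $S$ is the \emph{optimal} $2$-separator; and since $S^*_{\fl{w(T)/2}}$ is itself a valid $w(T)/(\fl{w(T)/2})\approx 2$-separator of $T$, one can bound $c(S_T)$ in terms of $c(S^*_{\fl{w(T)/2}})$. The recursion then applies the same bound inside each component $H$, where the relevant optimal separators are $S^*_k \cap H$. Here is where Lemma~\ref{splitting} enters: it converts the ``single heavy separator times weight'' charge at each recursion node into a sum $\sum_k c(S^*_{\fl{k/2}} \cap \mathcal{G})$ over a range of levels, which is exactly the form appearing in the global lower bound. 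Summing these local inequalities over all recursion nodes, the ranges of $k$ should tile the full range $[0, w(T)]$ (using laminarity of the level structure), yielding $c_T(D) \le 2 \sum_{k} c(S^*_{\fl{k/2}}) \le 4\cdot\OPT(T)$, with the factor of $2$ coming from Lemma~\ref{lb_opt} and another factor of $2$ from the weight-halving / the bound relating $c(S_T)$ to the optimal level.

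For Lemma~\ref{neighborsPathLemma} I would invoke it to justify that attaching each recursively-built $D_H$ below $D_T$ produces a valid decision tree: since $D_T$ queries only vertices of $S_T = N$-type boundary relative to $H$, the queries into $N_G(V(H))$ that survive form a path, so $D_H$ can be hung unambiguously below the last such query. This guarantees the output is a legitimate decision tree whose cost equals the sum I am bounding. Finally I would collect the running time: each recursion node runs \FSeparatorFPTAS\ in $O(n^3/\delta^2)$ and builds $D_T$ in $O(n^2)$; bounding the recursion depth by $O(\log n)$ (weight roughly halves) or more crudely by $O(n)$ levels with total work, and absorbing $\delta$ into $\epsilon$ via $\delta = \Theta(\epsilon)$, gives the claimed $O(n^4/\epsilon^2)$ bound.

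I expect the main obstacle to be the charging argument itself — specifically, showing that the per-node local bounds, each of the form given by Lemma~\ref{splitting} with its own $\beta$ and its own subgraph $\mathcal{G}$, sum up without double-counting to exactly the global telescoping sum $\sum_{k} c(S^*_{\fl{k/2}})$. One must choose the parameter $\beta$ in Lemma~\ref{splitting} consistently (likely $\beta = 1/2$ so the heavy-level weight matches the halving from $\alpha = 2$) and verify that as the recursion descends, the weight intervals $[(1-\beta)w(\mathcal{G}), w(\mathcal{G})]$ attached to the children partition the parent's interval. The laminarity of $\mathcal{R}_{D^*}(G)$ and the disjointness of the levels $\mathcal{L}^*_{k}$ stated in Section~\ref{levelsOfOptAndBasicBounds} are the tools that make this telescoping clean, and reconciling the $(1+\delta)$ slack from the FPTAS with the exact weight thresholds in the definition of the levels is the delicate bookkeeping I would need to handle carefully.
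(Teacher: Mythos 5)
Your plan follows the paper's proof almost step for step: the same algorithm with $\alpha=2$, the same comparison $c\br{S_{\mathcal{T}}}\leq c\br{S^*_{\fl{w\br{\mathcal{T}}/2}}\cap\mathcal{T}}$ via property 1 of Theorem~\ref{bicriteriaFPTAS}, Lemma~\ref{splitting} to spread each node's charge over a range of levels, a summation over all recursion nodes, and Lemma~\ref{lb_opt} to finish. The gap is that the one step you defer --- the summation over recursion nodes --- is the crux of the whole argument, and your tentative choices for carrying it out would not work as stated. First, the mechanism is not that the children's $k$-intervals ``partition the parent's interval'': sibling subtrees can have nearly equal weights and hence heavily overlapping $k$-ranges. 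What actually prevents double-counting is that for each \emph{fixed} $k$ the recursion subtrees contributing a term $c\br{S^*_{\fl{k/2}}\cap\mathcal{T}}$ form an antichain of the recursion (they are pairwise vertex-disjoint, because any strictly smaller recursion subtree sits inside a child and thus has weight below $k$), so their intersected costs sum to at most $c\br{S^*_{\fl{k/2}}}$; this is exactly the paper's Lemma~\ref{up_trees}.

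Second, this antichain property forces $\beta=\frac{1-\delta}{2}$ in Lemma~\ref{splitting}, not $\beta=1/2$ as you suggest. The bicriteria FPTAS only guarantees that each child has weight at most $\frac{1+\delta}{2}\cdot w\br{\mathcal{T}}$, which exceeds $w\br{\mathcal{T}}/2$. With $\beta=1/2$ the parent's charging range $\left[w\br{\mathcal{T}}/2+1,\,w\br{\mathcal{T}}\right]$ can overlap the range of a heavy child contained in it, so some level $k$ gets charged both at the parent and inside the child, and the sum over recursion nodes can then exceed $\sum_k c\br{S^*_{\fl{k/2}}}$. Choosing $\beta=\frac{1-\delta}{2}$ makes the parent's range start at $\frac{1+\delta}{2}\cdot w\br{\mathcal{T}}+1$, strictly above every child's weight, which restores disjointness along every root-to-leaf path of the recursion; the price is a factor $\frac{2}{1-\delta}$ per node instead of $2$, hence $\frac{4}{1-\delta}=4+\epsilon$ overall with $\delta=\frac{\epsilon}{4+\epsilon}$. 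This is precisely the resolution of the ``$(1+\delta)$ slack versus exact thresholds'' bookkeeping that you flagged but left open. The rest of your outline (validity of the attachment via Lemma~\ref{neighborsPathLemma}, and the running-time accounting of $O\br{n}$ calls at $O\br{n^3/\delta^2}$ each) matches the paper.
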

        \begin{proof}
            The procedure is as follows: 
            
\begin{algorithm}[H]
\caption{The $\br{4+\epsilon}$-approximation algorithm for the Tree Search Problem}
\label{createDecisionTree}
\SetKwFunction{FDecisionTree}{DecisionTree}
\SetKwFunction{FSeparatorFPTAS}{SeparatorFPTAS}
\SetKwProg{Fn}{Procedure}{:}{}
\Fn{$\FDecisionTree\br{T, c, w,  \epsilon}$}{
$S_T\gets\FSeparatorFPTAS\br{T, c, w, \alpha=2, \delta = \frac{\epsilon}{4+\epsilon}}$.

$D_T\gets$ arbitrary partial decision tree for $T$, built from vertices of $S_T$.

    \ForEach{$H\in T-S_T$}
    {
        $D_H\gets \FDecisionTree\br{H, c, w, \epsilon}$.

        Hang $D_H$ in $D_T$ below the last query to $v\in N_T\br{H}$.
    }   
    \Return $D_T$.
    
}
\end{algorithm}
            \begin{figure}[htbp]
    \centering
    \begin{minipage}{0.48\textwidth}
        \centering
        \begin{tikzpicture}[scale=0.7]
    \draw[thick, fill=white, drop shadow]
  (0,0) 
  .. controls (1,0) and (1,-6) .. (0,-6)  
  .. controls (-1,-6) and (-1,0) .. (0,0); 
  
  \node at (-0.75, 0)  {$S_T$};

Dots inside
\foreach \y in {-1,-2,-3,-5} {
  \fill (0,\y) circle (4pt);
}
\node at (0, -3.9) {$\vdots$};

\draw[thick, fill=white, drop shadow]
(3,0.75) 
  .. controls (4.5,0.75) and (4.5,-0.75) .. (3,-0.75)  
  .. controls (1.5,-0.75) and (1.5,0.75) .. (3,0.75); 

\node at (3, -0.05)  {$H_1$};

\draw[thick] (0.45,-0.5) -- (1.87,0);

\draw[thick] (0.745,-2.5) -- (2.04,-0.4);

\draw[thick, fill=white, drop shadow]
(3.5,-1) 
  .. controls (4.5,-1) and (4.5,-2) .. (3.5,-2)  
  .. controls (2.5,-2) and (2.5,-1) .. (3.5,-1); 

\node at (3.5, -1.55)  {$H_2$};

\draw[thick] (0.66,-1.6) -- (2.74,-1.5);

\draw[thick, fill=white, drop shadow]
(3.25,-2.25) 
  .. controls (4.5,-2.25) and (4.5,-3.5) .. (3.25,-3.5)  
  .. controls (2,-3.5) and (2,-2.25) .. (3.25,-2.25); 

\node at (3.25, -2.9)  {$H_3$};

\draw[thick] (0.75,-2.6) -- (2.32,-2.8);

\draw[thick] (0.75,-3.5) -- (2.32,-3);

\draw[thick] (0.58,-5) -- (2.6,-3.35);

\node at (3.15, -3.95) {$\vdots$};

\draw[thick, fill=white, drop shadow]
(3,-4.65) 
  .. controls (4.75,-4.65) and (4.75,-6.5) .. (3,-6.5)  
  .. controls (1.25,-6.5) and (1.25,-4.65) .. (3,-4.65); 
  
\node at (3, -5.6)  {$H_p$};

\draw[thick] (0.69,-4.25) -- (1.85,-5.1);

\draw[thick] (0.45,-5.5) -- (1.7,-5.75);

\end{tikzpicture}
    \end{minipage}
\begin{minipage}
    {0.48\textwidth}
    \centering
    \begin{tikzpicture}[scale=1]
            \draw[thick, fill=gray!30, drop shadow] (4,-4) -- (4.9,-5.8) -- (3.1,-5.8) -- cycle
                  node[right] {$D_{T}$};
                  
            \draw[thick, fill=white, drop shadow] (2.5,-6.5) -- (3.2,-7.9) -- (1.8,-7.9) -- cycle
            node[right] {$D_{H_1}$};

            \draw[thick, fill=white, drop shadow] (3.8,-6.4) -- (4.3,-7.4) -- (3.3,-7.4) -- cycle
            node[right] {$D_{H_2}$};
            
            \draw[thick, fill=white, drop shadow] (4.8,-6.3) -- (5.2,-7.1) -- (4.4,-7.1) -- cycle
            node[right] {$D_{H_3}$};
            
            \draw[thick, fill=white, drop shadow] (6.5,-6.5) -- (7.2,-8.1) -- (5.8,-8.1) -- cycle
            node[right] {$D_{H_p}$};
            
            \node at (5.75, -7) {$\dots$};
            
            \draw[thick] (3.1,-5.8) -- (2.5,-6.5);
            \draw[thick] (3.6,-5.8) -- (3.8,-6.4);
            \draw[thick] (4.2,-5.8) -- (4.8,-6.3);
            
            \draw[thick] (4.9,-5.8) -- (6.5,-6.5);
            
        \end{tikzpicture}
\end{minipage}
    \caption{The separator $S_T$ produced by the algorithm and the structure of the decision tree built using $S_T$.}
    \label{fig:placeholder}
\end{figure}
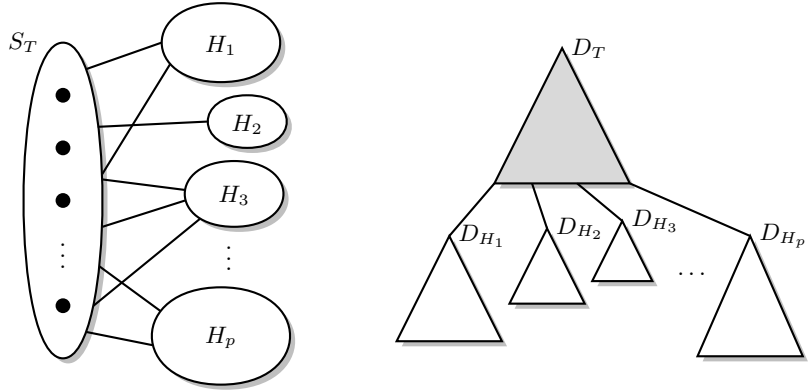
        
            Let $\mathcal{T}$ be a subtree of $T$ for which the procedure was called and let $S_{\mathcal{T}}^*=S_{\fl{w\br{\mathcal{T}}/2}}^*\cap\mathcal{T}$. By Theorem \ref{bicriteriaFPTAS}, we have that $c\br{S_{\mathcal{T}}}\leq c\br{S_{\mathcal{T}}^*}$. Using $\beta=\frac{1-\delta}{2}$ and applying Lemma \ref{splitting} we have that the contribution of the decision tree $D_{\mathcal{T}}$ is bounded by:
            \begin{align*}
                w\br{\mathcal{T}}\cdot c\br{S_{\mathcal{T}}}
                &\leq w\br{\mathcal{T}}\cdot c\br{S_{\mathcal{T}}^*}\leq \frac{2}{1-\delta}\cdot \sum_{k=\frac{1+\delta}{2}\cdot w\br{\mathcal{T}}+1}^{w\br{\mathcal{T}}}c\br{S_{\fl{k/2}}^*\cap \mathcal{T}}.
            \end{align*}
            
            To bound the cost of the whole solution we will firstly show the following lemma which is necessary to proceed:
            \begin{lemma}\label{up_trees}
            $$\sum_{\mathcal{T}}\sum_{k=\frac{1+\delta}{2}\cdot w\br{\mathcal{T}}+1}^{w\br{\mathcal{T}}}c\br{S_{\fl{k/2}}^*\cap \mathcal{T}}\leq \sum_{k=0}^{w\br{T}}c\br{S_{\fl{k/2}}^*}.$$
            \end{lemma}
            \begin{proof}
                Fix a value of $\mathcal{T}$ and $k$. Their contribution to the cost is $c\br{S_{\fl{k/2}}^*\cap \mathcal{T}}$. Consider which candidate subtrees contribute such a term. As $S_{\mathcal{T}}$ is a weighted $\frac{2}{1+\delta}$-separator, we have that $\mathcal{T}$ is the minimal candidate subtree, such that $w\br{\mathcal{T}}\geq k\geq \frac{\br{1+\delta}\cdot w\br{\mathcal{T}}}{2}+1 >w\br{H}$, for every $H\in \mathcal{T}-S_{\mathcal{T}}$. This means that if for every $H\in \mathcal{T}-S_{\mathcal{T}}$, $w\br{H}<k$, then $\mathcal{T}$ contributes such a term. Since for all $H_1, H_2\in \mathcal{T}-S_{\mathcal{T}}$ we have that $H_1\cap H_2=\emptyset$, $\br{S_{\fl{k/2}}^*\cap H_1}\cup \br{S_{\fl{k/2}}^*\cap H_2}=\emptyset$, the claim follows by summing over all values of $k$.
            \end{proof}
            
        We are now ready to bound the cost of the solution. Let $D$ be the decision tree returned by the procedure. Using the fact that by definition $\frac{4}{1-\delta}=4+\epsilon$, we have:
        \begin{align*}
            c_T\br{D} &\leq \sum_{\mathcal{T}} w\br{\mathcal{T}}\cdot c\br{S_\mathcal{T}}
            \leq \frac{2}{1-\delta}\cdot\sum_{\mathcal{T}}\sum_{k=\frac{1+\delta}{2}\cdot w\br{\mathcal{T}}+1}^{w\br{\mathcal{T}}}c\br{S_{\fl{k/2}}^*\cap \mathcal{T}}\\
            &\leq \frac{2}{1-\delta}\cdot\sum_{k=0}^{w\br{T}}c\br{S_{\fl{k/2}}^*}\leq \frac{4}{1-\delta}\cdot\OPT\br{T} = \br{4+\epsilon}\cdot\OPT\br{T}
    \end{align*}

    where the third inequality is due to Lemma \ref{up_trees} and the last inequality is by Lemma \ref{lb_opt}.

    As $1/\delta=\frac{4+\epsilon}{\epsilon}=1+4/\epsilon$ and each $v\in V\br T$ belongs to the set $S_{\mathcal{T}}$ exactly once, we have that the overall running time is at most $O\br{n^4/\epsilon^2}$ as required.
        \end{proof}

\section{Searching in general graphs}\label{serachingInGs}
To construct decision trees for general graphs, we exploit a connection to a different problem, 
namely the Min-Ratio Vertex Cut Problem. 
Let $\alpha_{c,w}\br{G}$ denote the optimal value of such a vertex cut. 
We invoke the following result given in \cite{Improvedapproximationalgorithmsvertexseparators}:

\begin{theorem}\label{approxmrvc}
    Given a graph $G=\br{V\br{G}, E\br{G}}$, the cost function $c:V\to\mathbb{N}$ and the weight function $w:V\to \mathbb{N}$, there exists a
polynomial-time algorithm, which computes a partition $(A, S, B)$, such that:
$$
\alpha_{c,w}\br{A,S,B}=O\br{\sqrt{\log n
}}\cdot\alpha_{c,w}\br{G}.
$$
\end{theorem}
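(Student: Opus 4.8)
The statement is invoked as a black box from \cite{Improvedapproximationalgorithmsvertexseparators}, so the plan is to reconstruct the guarantee through the semidefinite-programming framework of Arora–Rao–Vazirani, adapted from edge cuts to vertex cuts. First I would write a vector relaxation of the Min-Ratio Vertex Cut Problem. The natural encoding assigns to every vertex $v\in V\br{G}$ a vector $\mathbf{x}_v$ and imposes the negative-type (squared-Euclidean) triangle inequalities $\|\mathbf{x}_u-\mathbf{x}_w\|^2\leq\|\mathbf{x}_u-\mathbf{x}_v\|^2+\|\mathbf{x}_v-\mathbf{x}_w\|^2$ on all triples. The denominator $w\br{A\cup S}\cdot w\br{B\cup S}$ is relaxed into a spreading term $\sum_{u,v}w\br{u}w\br{v}\,\|\mathbf{x}_u-\mathbf{x}_v\|^2$, while the numerator $c\br{S}$ is relaxed by charging each vertex a fractional separating contribution $\sum_v c\br{v}\,\ell\br{v}$, where $\ell\br{v}$ measures the $\ell_2^2$-length that $\mathbf{x}_v$ must cover relative to its neighbors; the constraints force $v$ to be charged whenever its vector lies between those of neighbors that end up on opposite sides. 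Any integral cut, where $\mathbf{x}_v$ takes one of two fixed positions on $A$ and $B$, recovers exactly $\alpha_{c,w}\br{A,S,B}$, so the optimal SDP value is a valid lower bound on $\alpha_{c,w}\br{G}$.

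The relaxation is convex and solvable to arbitrary accuracy in polynomial time. The heart of the rounding is the ARV structure theorem: in any negative-type metric on $n$ points whose pairwise squared distances are $\Omega\br{1}$ on average, there exist two sets, each carrying $\Omega\br{1}$ fraction of the weight, that are $\Omega\br{1/\sqrt{\log n}}$-separated in the $\ell_2^2$ metric. After rescaling the SDP solution so the spreading term equals $1$, I would apply this (in its weighted form) to obtain two heavy, well-separated clusters $L$ and $R$ of vertices at squared distance $\Delta=\Omega\br{1/\sqrt{\log n}}$.

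Given $L$ and $R$, I would sweep balls of increasing $\ell_2^2$-radius around $L$ and use a region-growing argument to pick a radius at which the \emph{vertex} boundary is cheap: set $A$ to the vertices strictly inside, $B$ to those strictly outside, and $S$ to those straddling the threshold. Averaging the charge $\sum_v c\br{v}\ell\br{v}$ over radii in $\sqbr{0,\Delta}$ shows that some threshold yields $c\br{S}=O\br{\sqrt{\log n}}$ times the SDP numerator, while $L\subseteq A$ and $R\subseteq B$ force both $w\br{A\cup S}$ and $w\br{B\cup S}$ to be $\Omega\br{w\br{G}}$, so the denominator matches the spreading term up to a constant. Combining the two bounds gives a cut of ratio $O\br{\sqrt{\log n}}$ times the SDP value, hence $O\br{\sqrt{\log n}}\cdot\alpha_{c,w}\br{G}$.

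The main obstacle — and the reason this is genuinely harder than the edge-cut (sparsest cut) case — is that the separator lives on vertices, so the region-growing step must bound a vertex boundary rather than an edge boundary: each separator vertex must be charged to the $\ell_2^2$-length it occupies, and one must argue that two adjacent vertices cannot sit far apart across a threshold without at least one being swept into $S$. Formalizing this with non-uniform costs $c$ and weights $w$ is exactly where the weighted structure theorem and the careful design of the vertex-stretch term $\ell\br{\cdot}$ are needed, and it is precisely the technical content of \cite{Improvedapproximationalgorithmsvertexseparators}, which I would invoke rather than re-derive in full.
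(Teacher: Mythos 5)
Your proposal is correct and takes essentially the same route as the paper: the paper provides no proof of this theorem whatsoever, importing it as a black box from \cite{Improvedapproximationalgorithmsvertexseparators}, and you likewise defer the technical content to that citation. Your accompanying sketch of the negative-type SDP relaxation, the ARV structure theorem, and the vertex-boundary (rather than edge-boundary) rounding is a faithful outline of how the cited result is established, so nothing further is required.
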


Combining the latter procedure, with the following result, yields an $O\br{\sqrt{\log n
}}$-approximation algorithm for the Graph Search Problem:
\begin{theorem}
    Let $f_n$ be the approximation ratio of any polynomial time algorithm for the Min-Ratio Vertex Cut Problem. Then, there exists an $O\br{f_n}$-approximation algorithm for the GSP, running in polynomial time.
\end{theorem}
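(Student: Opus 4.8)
The plan is to mirror the recursive strategy of Algorithm \ref{createDecisionTree}, replacing the call to $\FSeparatorFPTAS$ by a routine that turns the $f_n$-approximate min-ratio vertex cut of Theorem \ref{approxmrvc} into a balanced vertex separator. Concretely, at a recursion node corresponding to a connected candidate subgraph $\mathcal{G}$, I would build a separator $S_\mathcal{G}$ with the property that every component $H$ of $\mathcal{G}-S_\mathcal{G}$ satisfies $w\br{H}\leq \gamma\cdot w\br{\mathcal{G}}$ for a fixed constant $\gamma<1$, construct an arbitrary partial decision tree $D_\mathcal{G}$ from the vertices of $S_\mathcal{G}$, and recurse on each component of $\mathcal{G}-S_\mathcal{G}$, attaching the results below the last query to $N_G\br{H}$ as justified by Lemma \ref{neighborsPathLemma}. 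Since $\mathcal{G}-S_\mathcal{G}$ consists of pairwise disjoint components and the candidate subgraphs produced by the recursion form a laminar family, the general-graph analogues of Lemma \ref{splitting} (already stated for an arbitrary subgraph) and of Lemma \ref{up_trees} (whose proof uses only disjointness of the components and laminarity) remain valid, so the only genuinely new ingredient is a cost bound on $S_\mathcal{G}$.

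For the lower bound I would reuse the level decomposition of $\OPT$. Writing $B^*_\mathcal{G}=S_{\fl{w\br{\mathcal{G}}/2}}^*\cap\mathcal{G}$, note that deleting more vertices can only shrink components, so each component of $\mathcal{G}-B^*_\mathcal{G}$ is contained in a component of $G-S_{\fl{w\br{\mathcal{G}}/2}}^*$ and thus has weight at most $w\br{\mathcal{G}}/2$; hence $B^*_\mathcal{G}$ is a feasible balanced separator of $\mathcal{G}$, and the cheapest such separator costs no more than $c\br{B^*_\mathcal{G}}$. Grouping the pieces of $\mathcal{G}-B^*_\mathcal{G}$ greedily into two sides $P,Q$, each of weight $\Omega\br{w\br{\mathcal{G}}}$, exhibits $\br{P,B^*_\mathcal{G},Q}$ as a feasible vertex cut and yields $\alpha_{c,w}\br{\mathcal{G}}=O\br{c\br{B^*_\mathcal{G}}/w\br{\mathcal{G}}^2}$. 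This is the bridge between the min-ratio objective and the separator cost that must be charged against $\OPT$.

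For the upper bound I would extract from the approximate cut a balanced separator with $c\br{S_\mathcal{G}}=O\br{f_n}\cdot c\br{B^*_\mathcal{G}}$: apply Theorem \ref{approxmrvc} to the current piece, delete its separator, and repeat on any piece still heavier than $\gamma\cdot w\br{\mathcal{G}}$, accumulating all separator vertices into $S_\mathcal{G}$. By Lemma \ref{contributionLemma} the contribution of querying $S_\mathcal{G}$ is at most $w\br{\mathcal{G}}\cdot c\br{S_\mathcal{G}}$, so $c_G\br{D}\leq \sum_\mathcal{G} w\br{\mathcal{G}}\cdot c\br{S_\mathcal{G}}$. Combining $c\br{S_\mathcal{G}}=O\br{f_n}c\br{B^*_\mathcal{G}}\leq O\br{f_n}c\br{S_{\fl{w\br{\mathcal{G}}/2}}^*\cap\mathcal{G}}$ with the telescoping supplied by Lemma \ref{splitting} and Lemma \ref{up_trees}, and finally Lemma \ref{lb_opt}, bounds $c_G\br{D}=O\br{f_n}\cdot\OPT\br{G}$; plugging in $f_n=O\br{\sqrt{\log n}}$ from Theorem \ref{approxmrvc} gives the claimed ratio, and all steps run in polynomial time.

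The \emph{main obstacle} is the upper-bound step, namely certifying that the repeated-cutting routine produces a balanced separator whose total cost stays within $O\br{f_n}$ of $c\br{B^*_\mathcal{G}}$. The difficulty is that the $f_n$-approximation controls only the ratio of each individual cut, not its balance, so a single invocation may peel off an arbitrarily light, unbalanced piece; one must therefore argue that the min-ratio value of \emph{every} piece encountered during the recursion is again $O\br{c\br{B^*_\mathcal{G}}/w\br{\mathcal{G}}^2}$ — by intersecting $B^*_\mathcal{G}$ with that piece and re-grouping its parts into two balanced sides — and that the geometric decrease in piece weight keeps the number of cuts, and hence the accumulated separator cost, under control. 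This is the standard but delicate sparsest-cut-to-balanced-separator conversion, here adapted to vertex cuts and to the weighted average-case charging scheme.
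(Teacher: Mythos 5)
Your recursive skeleton matches the paper's algorithm, but your core technical route does not: you propose converting the $f_n$-approximate min-ratio cut into a \emph{balanced} separator of each candidate subgraph $\mathcal{G}$ by repeated peeling, and then reusing the tree analysis, whereas the paper never builds a balanced separator at all. In the paper, $\FAlgorithmMinCut$ is invoked exactly once per candidate subgraph, the (possibly very unbalanced) cut $\br{A_{\mathcal{G}},S_{\mathcal{G}},B_{\mathcal{G}}}$ is used as is, and the imbalance is absorbed by the charging scheme: assuming $w\br{A_{\mathcal{G}}}\geq w\br{B_{\mathcal{G}}}$, the contribution $w\br{\mathcal{G}}\cdot c\br{S_{\mathcal{G}}}$ is bounded by $\lambda\, f_n\, w\br{B_{\mathcal{G}}\cup S_{\mathcal{G}}}\cdot c\br{S_{\mathcal{G}}^*}$, which Lemma \ref{splitting} converts into a sum over the interval $k\in\left[w\br{A_{\mathcal{G}}}+1,\,w\br{\mathcal{G}}\right]$, whose length is exactly $w\br{B_{\mathcal{G}}\cup S_{\mathcal{G}}}$; disjointness of these charges across the laminar family (Lemma \ref{up_graphs}) holds because every component of $\mathcal{G}-S_{\mathcal{G}}$ has weight at most $w\br{A_{\mathcal{G}}}$. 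Your lower-bound paragraph is essentially the paper's Lemma \ref{lambda_lemma}, except that you omit the case where $B^*_{\mathcal{G}}$ itself carries a constant fraction of the weight: then the grouped sides $P,Q$ cannot both have weight $\Omega\br{w\br{\mathcal{G}}}$, and one must instead use $w\br{P\cup B^*_{\mathcal{G}}}\cdot w\br{Q\cup B^*_{\mathcal{G}}}\geq w\br{B^*_{\mathcal{G}}}^2$.

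The genuine gap is the step you yourself flag and leave unproven: that the peeling loop outputs $S_{\mathcal{G}}$ with $c\br{S_{\mathcal{G}}}=O\br{f_n}\cdot c\br{B^*_{\mathcal{G}}}$. Moreover, the mechanism you sketch for closing it is partly wrong: there is no ``geometric decrease in piece weight,'' since (as you note) each invocation may peel off an arbitrarily light side, so the number of iterations can be $\Theta\br{n}$, and bounding the total cost as (number of cuts) times (cost per cut) does not give $O\br{f_n}$. The correct argument does not bound the number of cuts at all; it charges each cut to the weight it removes. If the cut of the current heavy piece $P$ is $\br{A',S',B'}$ with $w\br{B'\cup S'}\leq w\br{A'\cup S'}$, its cost is at most $f_n\cdot O\br{c\br{B^*_{\mathcal{G}}}/w\br{\mathcal{G}}^2}\cdot w\br{A'\cup S'}\cdot w\br{B'\cup S'}=O\br{f_n\, c\br{B^*_{\mathcal{G}}}/w\br{\mathcal{G}}}\cdot w\br{B'\cup S'}$, and the removed sets $B'\cup S'$ are pairwise disjoint across iterations, so their weights sum to at most $w\br{\mathcal{G}}$, giving total cost $O\br{f_n}\cdot c\br{B^*_{\mathcal{G}}}$. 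You also need the ratio bound for every intermediate heavy piece $P$ with $w\br{P}>\gamma\, w\br{\mathcal{G}}$, which forces $\gamma>1/2$ so that the components of $P-B^*_{\mathcal{G}}$ (each of weight at most $w\br{\mathcal{G}}/2$) can be regrouped into two sides of weight $\Omega\br{w\br{\mathcal{G}}}$, again with the heavy-separator case handled separately. With these repairs your route does go through and yields an $O\br{f_n}$-approximation, but it requires strictly more machinery than the paper's analysis, which obtains the same bound with a single cut per level precisely because its charging interval adapts to the imbalance of the returned cut.
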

    \begin{proof}
        Let $\FAlgorithmMinCut$ be the procedure which achieves the $f_n$-approximation ratio for the Min-Ratio Vertex Cut Problem.
        The algorithm is as follows:
        
\begin{algorithm}[H]
\caption{The $f_n$-approximation algorithm for the Graph Search Problem.}
\label{createDecisionTree}
\SetKwFunction{FDecisionTree}{DecisionTree}
\SetKwProg{Fn}{Procedure}{:}{}
\Fn{$\FDecisionTree\br{G,c,w}$}{
$A_G,S_G, B_G\gets\FAlgorithmMinCut\br{G, c, w}$.

$D_G\gets$ arbitrary partial decision tree for $G$, built from vertices of $S_G$.

    \ForEach{$H\in G-S_G$}
    {
        $D_H\gets \FDecisionTree\br{H, c, w}$.

        Hang $D_H$ in $D_G$ below the last query to $v\in N_G\br{H}$.
    }   
    \Return $D_G$.
    
}
\end{algorithm}
        
        Let $\mathcal{G}$ be any subgraph of $G$, for which the procedure was called and let $S_{\mathcal{G}}^*=S_{\fl{w\br{\mathcal{G}}/2}}^*\cap \mathcal{G}$.
        \begin{lemma}\label{lambda_lemma}
            Let $\mathcal{H}=\mathcal{G}-S_{\mathcal{G}}^*$ and let $\lambda=6+2\sqrt{5}$ be the unique, positive solution of the equation $\frac{1}{4}-\frac{1}{2\sqrt{\lambda}}=\frac{1}{\lambda}$. Then, we can partition $\mathcal{H}$ into two sets, $\mathcal{A}$ and $\mathcal{B}$ such that for $A=\bigcup_{H\in\mathcal{A}}V\br{H}$ and $B=\bigcup_{H\in\mathcal{B}}V\br{H}$, we have:
            $$w\br{A\cup S_{\mathcal{G}}^*}\cdot w\br{B\cup S_{\mathcal{G}}^*}\geq w\br{\mathcal{G}}^2/\lambda.$$
        \end{lemma}

            \begin{proof}
                See Appendix \ref{proof_lambda_lemma}.
            \end{proof}
        
        Using the fact, that the partition $\br{A,S_{\mathcal{G}}^*,B}$ in the above lemma is a vertex cut of $\mathcal{G}$, we have the following upper bound on the optimal value of the min-ratio-vertex cut of $\mathcal{G}$,  $\alpha_{c,w}\br{\mathcal{G}}$:
                $$\alpha_{c,w}\br{\mathcal{G}}\leq \alpha_{c,w}\br{A,S_{\mathcal{G}}^*,B}=\frac{c\br{S_{\mathcal{G}}^*}}{w\br{A\cup S_{\mathcal{G}}^*}\cdot w\br{B\cup S_{\mathcal{G}}^*}}\leq \frac{\lambda\cdot c\br{S_{\mathcal{G}}^*}}{w\br{\mathcal{G}}^2}.
                $$
                
        Let $\br{A_{\mathcal{G}},S_{\mathcal{G}}, B_{\mathcal{G}}}$, be the partition returned by $\FAlgorithmMinCut \br{\mathcal{G},c,w}$. Without the loss of generality assume that $w\br{A_{\mathcal{G}}}\geq w\br{B_{\mathcal{G}}}$. Using Theorem \ref{approxmrvc}, we get that:
        $$
        \alpha_{c,w}\br{A_{\mathcal{G}},S_{\mathcal{G}}, B_{\mathcal{G}}}=\frac{c\br{S_{\mathcal{G}}}}{w\br{A_{\mathcal{G}}\cup S_{\mathcal{G}}}\cdot w\br{B_{\mathcal{G}}\cup S_{\mathcal{G}}}}\leq f_n \cdot \frac{\lambda\cdot c\br{S_{\mathcal{G}}^*}} {w\br{\mathcal{G}}^2}.
        $$
        
        Let $\beta=w\br{B_{\mathcal{G}}\cup S_{\mathcal{G}}}/w\br{\mathcal{G}}$. We have that $\br{1-\beta}\cdot w\br{\mathcal{G}} = w\br{A_\mathcal{G}}$, so we conclude that the contribution of the decision tree $D_{\mathcal{G}}$ is bounded by:
        \begin{align*}
        w\br{\mathcal{G}}\cdot c\br{S_{\mathcal{G}}} &\leq \lambda \cdot f_n\cdot \frac{w\br{A_{\mathcal{G}}\cup S_{\mathcal{G}}}\cdot w\br{B_{\mathcal{G}}\cup S_{\mathcal{G}}}}{w\br{\mathcal{G}}}\cdot c\br{S_{\mathcal{G}}^*}\\
        & \leq 
        \lambda \cdot f_n\cdot w\br{B_{\mathcal{G}}\cup S_{\mathcal{G}}}\cdot c\br{S_{\mathcal{G}}^*} \leq 
        \lambda \cdot f_n\cdot \sum_{k=w\br{A_{\mathcal{G}}}+1}^{w\br{\mathcal{G}}}c\br{S_{\fl{k/2}}^*\cap \mathcal{G}}
        \end{align*}

        where the last inequality is by Lemma \ref{splitting}.

        As before, to bound the cost of the whole solution we will firstly show the following lemma. The argument is mostly the same as for the Lemma \ref{up_trees}, however, there are few differences and we include it for completeness:
            \begin{lemma}\label{up_graphs}
            $$\sum_{\mathcal{G}}\sum_{k=w\br{A_{\mathcal{G}}}+1}^{w\br{\mathcal{G}}}c\br{S_{\fl{k/2}}^*\cap \mathcal{G}}\leq \sum_{k=0}^{w\br{G}}c\br{S_{\fl{k/2}}^*}.$$
            \end{lemma}
            \begin{proof}
                See Appendix \ref{proof_of_up_graphs}.
            \end{proof}
            
        We are now ready to bound the cost of the solution. Let $D$ be the decision tree returned by the procedure. We have:
        \begin{align*}
            c_G\br{D}&\leq \sum_{\mathcal{G}}w\br{\mathcal{G}}\cdot c\br{S_{\mathcal{G}}}\\
            &\leq \lambda \cdot f_n\cdot \sum_{\mathcal{G}}\sum_{k=w\br{A_{\mathcal{G}}}+1}^{w\br{\mathcal{G}}}c\br{S_{\fl{k/2}}^*\cap \mathcal{G}}\leq 
            \lambda \cdot f_n\cdot\sum_{k=0}^{w\br{G}}c\br{S_{\fl{k/2}}^*} \\
            &\leq 2\cdot \lambda \cdot f_n\cdot \OPT\br{G} = \br{12+4\sqrt{5}}\cdot f_n \cdot \OPT\br{G}
        \end{align*}
        
    where the third inequality is due to Lemma \ref{up_graphs} and the last inequality is by Lemma \ref{lb_opt}.
    \end{proof}

\section{Conclusions and further work}
We have presented an $O\br{\sqrt{\log n}}$-approximation algorithm for the general 
Graph Search Problem, and a $\br{4+\epsilon}$-approximation algorithm for the case 
when the input graph is a tree. To the best of our knowledge, these are the first approximation results for the vertex query and non-uniform costs variant of the problem.

Moreover, the $\br{4+\epsilon}$-approximation algorithm for trees and its analysis can easily be 
adapted to obtain a $\br{4+\epsilon}$-approximation for the edge-query version of the problem, 
thus improving the $6.75$-approximation given in 
\cite{Acostfunctionforsimilaritybasedhierarchicalclustering,HCObjFsandAlgs,Approximatehierarchicalclusteringviasparsestcutandspreadingmetrics}.

The advantage of our solution lies in its simplicity, making it suitable for practical applications. 
This also places the average-case version of the problem in stark contrast to the worst-case variant, 
for which achieving even the state-of-the-art $O\br{\sqrt{\log n}}$-approximation for trees 
required a non-trivial dynamic programming procedure and an intricate analysis 
\cite{dereniowski2017ApproxSsForGeneralBSinWTs}. 
An open question is whether a constant-factor approximation exists for the latter as well.

The current state-of-the-art algorithm for the Min-Ratio Vertex Cut Problem, which we use as an essential 
subroutine for our GSP algorithm, is based on an SDP relaxation. 
An interesting direction for future research is to formulate a similar relaxation for GSP itself 
and to prove that it has an integrality gap of at most $O\br{\sqrt{\log n}}$. 
Another question is whether the approximation ratio for GSP is inherently limited by the 
approximation ratio of Min-Ratio Vertex Cut Problem, or if there exists a method to improve the 
approximation without relying on vertex cuts. 

On the hardness side, we have shown that the problem is NP-hard even when restricted to trees with $\Delta\br{T}\leq 16$ and to trees with $\diam\br{T}\leq 8$. Note, that for general graphs, it can be argued that the problem cannot 
be approximated within any constant factor unless the Small Set Expansion Hypothesis fails, 
by reusing the argument for the edge-query variant provided in 
\cite{Approximatehierarchicalclusteringviasparsestcutandspreadingmetrics}. 
An interesting open question is whether one can establish an inapproximability result 
for some constant $c \leq 4$ in the case when the input graph is a tree.

\begin{credits}

\subsubsection{\discintname}
The authors have no competing interests to declare that are
relevant to the content of this article.
\end{credits}
%
%
%
%

\bibliographystyle{splncs04}
\bibliography{references}

@article{Deligkas2019BsInGsRev,
  author    = {Argyrios Deligkas and George B. Mertzios and Paul G. Spirakis},
  title     = {Binary Search in Graphs Revisited},
  journal   = {Algorithmica},
  volume    = {81},
  number    = {5},
  pages     = {1757--1780},
  year      = {2019},
  month     = {May},
  abstract  = {In the classical binary search in a path the aim is to detect an unknown target by asking as few queries as possible, where each query reveals the direction to the target. This binary search algorithm has been recently extended by Emamjomeh-Zadeh et al. (in: Proceedings of the 48th annual ACM SIGACT symposium on theory of computing, STOC 2016, Cambridge, pp. 519–532, 2016) to the problem of detecting a target in an arbitrary graph. Similarly to the classical case in the path, the algorithm of Emamjomeh-Zadeh et al. maintains a candidates’ set for the target, while each query asks an appropriately chosen vertex—the “median”—which minimises a potential $\varPhi$ among the vertices of the candidates’ set. In this paper we address three open questions posed by Emamjomeh-Zadeh et al., namely (a) detecting a target when the query response is a direction to an approximately shortest path to the target, (b) detecting a target when querying a vertex that is an approximate median of the current candidates’ set (instead of an exact one), and (c) detecting multiple targets, for which to the best of our knowledge no progress has been made so far. We resolve questions (a) and (b) by providing appropriate upper and lower bounds, as well as a new potential $\varGamma$ that guarantees efficient target detection even by querying an approximate median each time. With respect to (c), we initiate a systematic study for detecting two targets in graphs and we identify sufficient conditions on the queries that allow for strong (linear) lower bounds and strong (polylogarithmic) upper bounds for the number of queries. All of our positive results can be derived using our new potential $\varGamma$ that allows querying approximate medians.},
  issn      = {1432-0541},
  doi       = {10.1007/s00453-018-0501-y}
}

@inproceedings{Emamjomeh2016DetAndProbBSinGs,
author = {Emamjomeh-Zadeh, Ehsan and Kempe, David and Singhal, Vikrant},
year = {2016},
month = {06},
pages = {519-532},
title = {Deterministic and probabilistic binary search in graphs},
doi = {10.1145/2897518.2897656}
}

@InProceedings{Jacobs2010OnTheComplexSearchInTsAvg,
    author    = {Tobias Jacobs and Ferdinando Cicalese and Eduardo Laber and Marco Molinaro},
    editor    = {Samson Abramsky and Cyril Gavoille and Claude Kirchner and Friedhelm Meyer auf der Heide and Paul G. Spirakis},
    title     = {On the Complexity of Searching in Trees: Average-Case Minimization},
    booktitle = {Automata, Languages and Programming},
    year      = {2010},
    publisher = {Springer Berlin Heidelberg},
    address   = {Berlin, Heidelberg},
    pages     = {527--539},
    abstract  = {The well known binary search method can be described as the process of identifying some marked node from a line graph $T$ by successively querying edges. An edge query $e$ asks in which of the two subpaths induced by $T \setminus e$ the marked node lies. This procedure can be naturally generalized to the case where $T = (V,E)$ is a tree instead of a line. The problem of determining a tree search strategy minimizing the number of queries in the worst case is solvable in linear time [Onak et al. FOCS'06, Mozes et al. SODA'08].},
    isbn      = {978-3-642-14165-2}
}

@article{Cicalese2014ImprovedApproxAvgTs,
author = {Cicalese, Ferdinando and Jacobs, Tobias and Laber, Eduardo and Molinaro, Marco},
year = {2014},
month = {04},
pages = {},
title = {Improved Approximation Algorithms for the Average-Case Tree Searching Problem},
volume = {68},
journal = {Algorithmica},
doi = {10.1007/s00453-012-9715-6}
}

@article{Hgemo2024TightAB,
  title={Tight Approximation Bounds on a Simple Algorithm for Minimum Average Search Time in Trees},
  author={Svein H{\o}gemo},
  journal={ArXiv},
  year={2024},
  volume={abs/2402.05560},
  url={https://api.semanticscholar.org/CorpusID:267547530}
}

@inbook{SplayTonT,
author = {Berendsohn, Benjamin and Kozma, László},
year = {2022},
month = {01},
pages = {1875-1900},
title = {Splay trees on trees},
isbn = {978-1-61197-707-3},
doi = {10.1137/1.9781611977073.75}
}

@unknown{Fast_app_centroid_trees,
author = {Berendsohn, Benjamin and Golinsky, Ishay and Kaplan, Haim and Kozma, László},
year = {2022},
month = {09},
pages = {},
title = {Fast approximation of search trees on trees with centroid trees},
doi = {10.48550/arXiv.2209.08024}
}

@article{Dereniowski2009ERankOfWTs,
title = {Edge ranking of weighted trees},
journal = {Discrete Applied Mathematics},
volume = {154},
number = {8},
pages = {1198-1209},
year = {2006},
issn = {0166-218X},
doi = {https://doi.org/10.1016/j.dam.2005.11.005},
author = {Dariusz Dereniowski},
keywords = {Approximation, Edge ranking, Multigraph, NP-completeness, Polynomial algorithm},
abstract = {In this paper we consider the edge ranking problem of weighted trees. We prove that a special instance of this problem, namely edge ranking of multitrees is NP-hard already for multitrees with diameter at most 10. Note that the same problem but for trees is linearly solvable. We give an O(logn)-approximation polynomial time algorithm for edge ranking of weighted trees.}
}

@InProceedings{dereniowski2017ApproxSsForGeneralBSinWTs,
  author =	{Dereniowski, Dariusz and Kosowski, Adrian and Uznanski, Przemyslaw and Zou, Mengchuan},
  title =	{{Approximation Strategies for Generalized Binary Search in Weighted Trees}},
  booktitle =	{44th International Colloquium on Automata, Languages, and Programming (ICALP 2017)},
  pages =	{84:1--84:14},
  series =	{Leibniz International Proceedings in Informatics (LIPIcs)},
  ISBN =	{978-3-95977-041-5},
  ISSN =	{1868-8969},
  year =	{2017},
  volume =	{80},
  editor =	{Chatzigiannakis, Ioannis and Indyk, Piotr and Kuhn, Fabian and Muscholl, Anca},
  publisher =	{Schloss Dagstuhl -- Leibniz-Zentrum f{\"u}r Informatik},
  address =	{Dagstuhl, Germany},
  URN =		{urn:nbn:de:0030-drops-74507},
  doi =		{10.4230/LIPIcs.ICALP.2017.84},
  annote =	{Keywords: Approximation Algorithm, Adaptive Algorithm, Graph Search, Binary Search, Vertex Ranking, Trees}
}

@article{Cicalese2012BinIdentPForWTs,
title = {The binary identification problem for weighted trees},
journal = {Theoretical Computer Science},
volume = {459},
pages = {100-112},
year = {2012},
issn = {0304-3975},
doi = {https://doi.org/10.1016/j.tcs.2012.06.023},
author = {Ferdinando Cicalese and Tobias Jacobs and Eduardo Laber and Caio Valentim},
abstract = {The Binary Identification Problem for weighted trees asks for the minimum cost strategy (decision tree) for identifying a vertex in an edge weighted tree via testing edges. Each edge has assigned a different cost, to be paid for testing it. Testing an edge e reveals in which component of T−e lies the vertex to be identified. We give a complete characterization of the computational complexity of this problem with respect to both tree diameter and degree. In particular, we show that it is strongly NP-hard to compute a minimum cost decision tree for weighted trees of diameter at least 6, and for trees having degree three or more. For trees of diameter five or less, we give a polynomial time algorithm. Moreover, for the degree 2 case, we significantly improve the straightforward O(n3) dynamic programming approach, and provide an O(n2) time algorithm. Finally, this work contains the first approximate decision tree construction algorithm that breaks the barrier of factor logn.}
}

@article{Cicalese2016OnTSPwNonUniCost,
title = {On the tree search problem with non-uniform costs},
journal = {Theoretical Computer Science},
volume = {647},
pages = {22-32},
year = {2016},
issn = {0304-3975},
doi = {https://doi.org/10.1016/j.tcs.2016.07.019},
author = {Ferdinando Cicalese and Balázs Keszegh and Bernard Lidický and Dömötör Pálvölgyi and Tomáš Valla},
keywords = {Tree search problem, Approximation algorithm},
abstract = {Searching in partially ordered structures has been considered in the context of information retrieval and efficient tree-like indices, as well as in hierarchy based knowledge representation. In this paper we focus on tree-like partial orders and consider the problem of identifying an initially unknown vertex in a tree by asking edge queries: an edge query e returns the component of T−e containing the vertex sought for, while incurring some known cost c(e). The Tree Search Problem with Non-Uniform Cost is the following: given a tree T on n vertices, each edge having an associated cost, construct a strategy that minimizes the total cost of the identification in the worst case. Finding the strategy guaranteeing the minimum possible cost is an NP-complete problem already for input trees of degree 3 or diameter 6. The best known approximation guarantee was an O(log⁡n/log⁡log⁡log⁡n)-approximation algorithm of Cicalese et al. (2012) [4]. We improve upon the above results both from the algorithmic and the computational complexity point of view: We provide a novel algorithm that provides an O(log⁡nlog⁡log⁡n)-approximation of the cost of the optimal strategy. In addition, we show that finding an optimal strategy is NP-hard even when the input tree is a spider of diameter 6, i.e., at most one vertex has degree larger than 2.}
}

@article{Schaffer1989OptNodeRankOfTsInLinTime,
title = {Optimal node ranking of trees in linear time},
journal = {Information Processing Letters},
volume = {33},
number = {2},
pages = {91-96},
year = {1989},
issn = {0020-0190},
doi = {https://doi.org/10.1016/0020-0190(89)90161-0},
author = {Alejandro A. Schäffer},
keywords = {Tree, ranking, separator, algorithm},
abstract = {Iyer, Ratliff, and Vijayan [Inform. Process. Lett. 28 (1988) 225–229] defines a ranking of an unrooted tree to be any mapping from the nodes of the tree to the set {1,2,…} such that if two distinct nodes v, w have the same rank, then there exists a node x on the path between v and w such that the rank of x is greater than the rank of v and w. They also define a ranking to be optimal if the largest rank assigned to some node is as small as possible among all rankings. They give an O(n log n) time algorithm to find an optimal ranking of an n-node tree. This note describes an O(n) time algorithm to find an optimal ranking of a tree.}
}

@INPROCEEDINGS{OnakParys2006GenOfBSSInTsAndFLikePosets,
  author={Onak, Krzysztof and Parys, Pawel},
  booktitle={2006 47th Annual IEEE Symposium on Foundations of Computer Science (FOCS'06)}, 
  title={Generalization of Binary Search: Searching in Trees and Forest-Like Partial Orders}, 
  year={2006},
  volume={},
  number={},
  pages={379-388},
  keywords={File systems;File servers;Decision trees;Upper bound;Approximation algorithms;Computer science;Books;Testing},
  doi={10.1109/FOCS.2006.32}}

@inproceedings{Mozes_Onak2008FindOptTSStartInLinTime,
  author       = {Shay Mozes and
                  Krzysztof Onak and
                  Oren Weimann},
  editor       = {Shang{-}Hua Teng},
  title        = {Finding an optimal tree searching strategy in linear time},
  booktitle    = {Proceedings of the Nineteenth Annual {ACM-SIAM} Symposium on Discrete
                  Algorithms, {SODA} 2008, San Francisco, California, USA, January 20-22,
                  2008},
  pages        = {1096--1105},
  publisher    = {{SIAM}},
  year         = {2008},
  url          = {http://dl.acm.org/citation.cfm?id=1347082.1347202},
  timestamp    = {Fri, 07 Dec 2012 17:02:09 +0100},
  biburl       = {https://dblp.org/rec/conf/soda/MozesOW08.bib},
  bibsource    = {dblp computer science bibliography, https://dblp.org}
}

@article{Lam1998ERankOfGsIsH,
title = {Edge ranking of graphs is hard},
journal = {Discrete Applied Mathematics},
volume = {85},
number = {1},
pages = {71-86},
year = {1998},
issn = {0166-218X},
doi = {https://doi.org/10.1016/S0166-218X(98)00029-8},
author = {Tak Wah Lam and Fung Ling Yue},
keywords = {NP-completeness, Graph labeling algorithms, Edge ranking, Computational complexity, Approximability},
abstract = {An edge ranking of a graph is a restricted coloring of the edges with integers. It requires that every path between two edges with the same label i contains an intermediate edge with label j > i. An edge ranking is optimal if it uses the least number of distinct labels among all possible edge rankings. Recent research has revealed that the problem of finding an optimal edge ranking when restricted to trees admits a polynomial-time solution, yet the complexity of the problem for general graphs has remained open in the literature. In this paper, we prove that finding an optimal edge ranking of a graph is NP-hard. Also, we show that even finding a reasonably small edge ranking is infeasible in some cases.}
}

@InProceedings{dereniowski2022CFApproxAlgForBSInTsWithMonoQTimes,
  author =	{Dereniowski, Dariusz and Wrosz, Izajasz},
  title =	{{Constant-Factor Approximation Algorithm for Binary Search in Trees with Monotonic Query Times}},
  booktitle =	{47th International Symposium on Mathematical Foundations of Computer Science (MFCS 2022)},
  pages =	{42:1--42:15},
  series =	{Leibniz International Proceedings in Informatics (LIPIcs)},
  ISBN =	{978-3-95977-256-3},
  ISSN =	{1868-8969},
  year =	{2022},
  volume =	{241},
  editor =	{Szeider, Stefan and Ganian, Robert and Silva, Alexandra},
  publisher =	{Schloss Dagstuhl -- Leibniz-Zentrum f{\"u}r Informatik},
  address =	{Dagstuhl, Germany},
  URN =		{urn:nbn:de:0030-drops-168405},
  doi =		{10.4230/LIPIcs.MFCS.2022.42},
  annote =	{Keywords: binary search, graph search, approximation algorithm, query complexity}
}

@misc{dereniowski2024SInTsMonoQTs,
      title={Searching in trees with monotonic query times}, 
      author={Dariusz Dereniowski and Izajasz Wrosz},
      year={2024},
      eprint={2401.13747},
      archivePrefix={arXiv},
      primaryClass={cs.DS},
      url={https://arxiv.org/abs/2401.13747}, 
}

@article{Angelidakis2018ShortestPQ,
  title={Shortest path queries, graph partitioning and covering problems in worst and beyond worst case settings},
  author={Haris Angelidakis},
  journal={ArXiv},
  year={2018},
  volume={abs/1807.09389},
  url={https://api.semanticscholar.org/CorpusID:51718679}
}

@article{DereniowskiVxRankOfChGsAndWTs,
title = {Vertex rankings of chordal graphs and weighted trees},
journal = {Information Processing Letters},
volume = {98},
number = {3},
pages = {96-100},
year = {2006},
issn = {0020-0190},
doi = {https://doi.org/10.1016/j.ipl.2005.12.006},
author = {Dariusz Dereniowski and Adam Nadolski},
keywords = {Chordal graph, Computational complexity, Graph algorithms, Vertex ranking},
abstract = {In this paper we consider the vertex ranking problem of weighted trees. We show that this problem is strongly NP-hard. We also give a polynomial-time reduction from the problem of vertex ranking of weighted trees to the vertex ranking of (simple) chordal graphs, which proves that the latter problem is NP-hard. In this way we solve an open problem of Aspvall and Heggernes. We use this reduction and the algorithm of Bodlaender et al.'s for vertex ranking of partial k-trees to give an exact polynomial-time algorithm for vertex ranking of a tree with bounded and integer valued weight functions. This algorithm serves as a procedure in designing a PTAS for weighted vertex ranking problem of trees with bounded weight functions.}
}

@article{DereniowskiEfPQProcByGRank,
author = {Dereniowski, Dariusz and Kubale, Marek},
year = {2006},
month = {02},
pages = {273-285},
title = {Efficient Parallel Query Processing by Graph Ranking},
volume = {69},
journal = {Fundam. Inform.},
doi = {10.3233/FUN-2006-69302}
}

@article{DereniowskiERAndSInPOSets,
title = {Edge ranking and searching in partial orders},
journal = {Discrete Applied Mathematics},
volume = {156},
number = {13},
pages = {2493-2500},
year = {2008},
note = {Fifth International Conference on Graphs and Optimization},
issn = {0166-218X},
doi = {https://doi.org/10.1016/j.dam.2008.03.007},
author = {Dariusz Dereniowski},
keywords = {Dag, Edge ranking, Graph searching, Poset, Spanning tree},
abstract = {We consider a problem of searching an element in a partially ordered set (poset). The goal is to find a search strategy which minimizes the number of comparisons. Ben-Asher, Farchi and Newman considered a special case where the partial order has the maximum element and the Hasse diagram is a tree (tree-like posets) and they gave an O(n4log3n)-time algorithm for finding an optimal search strategy for such a partial order. We show that this problem is equivalent to finding edge ranking of a simple tree corresponding to the Hasse diagram, which implies the existence of a linear-time algorithm for this problem. Then we study a more general problem, namely searching in any partial order with maximum element. We prove that such a generalization is hard, and we give an O(lognlog(logn))-approximate polynomial-time algorithm for this problem.}
}

@book{Knuth1973,
  added-at = {2016-06-15T15:54:18.000+0200},
  author = {Knuth, Donald},
  biburl = {https://www.bibsonomy.org/bibtex/292d514d70188edc70b86721de9ae31ee/ilias_pbm_test},
  interhash = {132508fb11effd51eb89389b60bfeccc},
  intrahash = {92d514d70188edc70b86721de9ae31ee},
  keywords = {publikationen puma},
  pages = {391-392},
  publisher = {Addison-Wesley},
  timestamp = {2016-06-15T15:54:18.000+0200},
  title = {The Art Of Computer Programming, vol. 3: Sorting And Searching},
  year = 1973
}

@article{GIANNOPOULOU20122089,
title = {LIFO-search: A min–max theorem and a searching game for cycle-rank and tree-depth},
journal = {Discrete Applied Mathematics},
volume = {160},
number = {15},
pages = {2089-2097},
year = {2012},
issn = {0166-218X},
doi = {https://doi.org/10.1016/j.dam.2012.03.015},
url = {https://www.sciencedirect.com/science/article/pii/S0166218X12001199},
author = {Archontia C. Giannopoulou and Paul Hunter and Dimitrios M. Thilikos},
keywords = {Graph parameters, Graph searching, Pursuit–evasion games, Cycle-rank, Tree-depth, Obstructions, Min–max theorem},
abstract = {We introduce a variant of the classic node search game called LIFO-search where searchers are assigned different numbers. The additional rule is that a searcher can be removed only if no searchers of lower rank are in the graph at that moment. We show that all common variations of the game require the same number of searchers. We then introduce the notion of (directed) shelters in (di)graphs and prove a min–max theorem implying their equivalence to the cycle-rank/tree-depth parameter in (di)graphs. As (directed) shelters provide escape strategies for the fugitive, this implies that the LIFO-search game is monotone and that the LIFO-search parameter is equivalent to the one of cycle-rank/tree-depth in (di)graphs.}
}

@article{NESETRIL20061022,
title = {Tree-depth, subgraph coloring and homomorphism bounds},
journal = {European Journal of Combinatorics},
volume = {27},
number = {6},
pages = {1022-1041},
year = {2006},
issn = {0195-6698},
doi = {https://doi.org/10.1016/j.ejc.2005.01.010},
url = {https://www.sciencedirect.com/science/article/pii/S0195669805000570},
author = {Jaroslav Nešetřil and Patrice {Ossona de Mendez}},
abstract = {We define the notions tree-depth and upper chromatic number of a graph and show their relevance to local–global problems for graph partitions. In particular we show that the upper chromatic number coincides with the maximal function which can be locally demanded in a bounded coloring of any proper minor closed class of graphs. The rich interplay of these notions is applied to a solution of bounds of proper minor closed classes satisfying local conditions. In particular, we prove the following result: For every graph M and a finite set F of connected graphs there exists a (universal) graph U=U(M,F)∈Forbh(F) such that any graph G∈Forbh(F) which does not have M as a minor satisfies G⟶U (i.e. is homomorphic to U). This solves the main open problem of restricted dualities for minor closed classes and as an application it yields the bounded chromatic number of exact odd powers of any graph in an arbitrary proper minor closed class. We also generalize the decomposition theorem of DeVos et al. [M. DeVos, G. Ding, B. Oporowski, D.P. Sanders, B. Reed, P. Seymour, D. Vertigan, Excluding any graph as a minor allows a low tree-width 2-coloring, J. Combin. Theory Ser. B 91 (2004) 25–41].}
}

@techreport{Pothen1988OptimalEliminationTrees,
  author       = {Pothen, Alex},
  title        = {The Complexity of Optimal Elimination Trees},
  institution  = {Pennsylvania State University, Department of Computer Science},
  type         = {Technical Report},
  number       = {CS‑88‑16},
  month        = {April},
  year         = {1988},
  note         = {Penn State University CS‑88‑16}
}

@article{KATCHALSKI1995141,
title = {Ordered colourings},
journal = {Discrete Mathematics},
volume = {142},
number = {1},
pages = {141-154},
year = {1995},
issn = {0012-365X},
doi = {https://doi.org/10.1016/0012-365X(93)E0216-Q},
url = {https://www.sciencedirect.com/science/article/pii/0012365X93E0216Q},
author = {Meir Katchalski and William McCuaig and Suzanne Seager},
abstract = {Let k be a positive integer. An ordered k-colouring of a graph G is a function c from V(G) into {1,…,k} such that for every pair of distinct vertices x and y and for every (x, y)-path P, if c(x) = c(y), then there exists an internal vertex z of P such that c(x) < c(z). We prove some theorems on ordered colourings of trees and planar graphs, and examine the relationship between connectivity and ordered colourings. Let A be a set of graphs which is ordered by subgraphs and closed under subgraphs. We characterize when A is a well-quasi-order. A generalization of ordered colourings is given.}
}

@inproceedings{Acostfunctionforsimilaritybasedhierarchicalclustering,
author = {Dasgupta, Sanjoy},
title = {A cost function for similarity-based hierarchical clustering},
year = {2016},
isbn = {9781450341325},
publisher = {Association for Computing Machinery},
address = {New York, NY, USA},
url = {https://doi.org/10.1145/2897518.2897527},
doi = {10.1145/2897518.2897527},
abstract = {The development of algorithms for hierarchical clustering has been hampered by a shortage of precise objective functions. To help address this situation, we introduce a simple cost function on hierarchies over a set of points, given pairwise similarities between those points. We show that this criterion behaves sensibly in canonical instances and that it admits a top-down construction procedure with a provably good approximation ratio.},
booktitle = {Proceedings of the Forty-Eighth Annual ACM Symposium on Theory of Computing},
pages = {118–127},
numpages = {10},
keywords = {Hierarchical clustering, approximation algorithm},
location = {Cambridge, MA, USA},
series = {STOC '16}
}

@inproceedings{Approximatehierarchicalclusteringviasparsestcutandspreadingmetrics,
author = {Charikar, Moses and Chatziafratis, Vaggos},
title = {Approximate hierarchical clustering via sparsest cut and spreading metrics},
year = {2017},
publisher = {Society for Industrial and Applied Mathematics},
address = {USA},
abstract = {Dasgupta recently introduced a cost function for the hierarchical clustering of a set of points given pairwise similarities between them. He showed that this function is NP-hard to optimize, but a top-down recursive partitioning heuristic based on an αn-approximation algorithm for uniform sparsest cut gives an approximation of O(αn log n) (the current best algorithm has [EQUATION]. We show that the aforementioned sparsest cut heuristic in fact obtains an O(αn)-approximation. The algorithm also applies to a generalized cost function studied by Dasgupta. Moreover, we obtain a strong inapproximability result, showing that the Hierarchical Clustering objective is hard to approximate to within any constant factor assuming the Small-Set Expansion (SSE) Hypothesis. Finally, we discuss approximation algorithms based on convex relaxations. We present a spreading metric SDP relaxation for the problem and show that it has integrality gap at most [EQUATION]. The advantage of the SDP relative to the sparsest cut heuristic is that it provides an explicit lower bound on the optimal solution and could potentially yield an even better approximation for hierarchical clustering. In fact our analysis of this SDP served as the inspiration for our improved analysis of the sparsest cut heuristic. We also show that a spreading metric LP relaxation gives an O(log n)-approximation.},
booktitle = {Proceedings of the Twenty-Eighth Annual ACM-SIAM Symposium on Discrete Algorithms},
pages = {841–854},
numpages = {14},
location = {Barcelona, Spain},
series = {SODA '17}
}

@InProceedings{noisyBSSFC,
  author =	{Dereniowski, Dariusz and {\L}ukasiewicz, Aleksander and Uzna\'{n}ski, Przemys{\l}aw},
  title =	{{Noisy (Binary) Searching: Simple, Fast and Correct}},
  booktitle =	{42nd International Symposium on Theoretical Aspects of Computer Science (STACS 2025)},
  pages =	{29:1--29:18},
  series =	{Leibniz International Proceedings in Informatics (LIPIcs)},
  ISBN =	{978-3-95977-365-2},
  ISSN =	{1868-8969},
  year =	{2025},
  volume =	{327},
  editor =	{Beyersdorff, Olaf and Pilipczuk, Micha{\l} and Pimentel, Elaine and Thang, Nguyen Kim},
  publisher =	{Schloss Dagstuhl -- Leibniz-Zentrum f{\"u}r Informatik},
  address =	{Dagstuhl, Germany},
  URL =		{https://drops.dagstuhl.de/entities/document/10.4230/LIPIcs.STACS.2025.29},
  URN =		{urn:nbn:de:0030-drops-228551},
  doi =		{10.4230/LIPIcs.STACS.2025.29},
  annote =	{Keywords: Graph Algorithms, Noisy Binary Search, Query Complexity, Reliability}
}

@article{Dereniowski2024OnMG,
  title={On multidimensional generalization of binary search},
  author={Dariusz Dereniowski and Przemysław Gordinowicz and Karolina Wr'obel},
  journal={ArXiv},
  year={2024},
  volume={abs/2404.13193},
  url={https://api.semanticscholar.org/CorpusID:269293685}
}

@article{BOROWIECKI2023113682,
title = {The complexity of bicriteria tree-depth},
journal = {Theoretical Computer Science},
volume = {947},
pages = {113682},
year = {2023},
issn = {0304-3975},
doi = {https://doi.org/10.1016/j.tcs.2022.12.032},
url = {https://www.sciencedirect.com/science/article/pii/S0304397522007666},
author = {Piotr Borowiecki and Dariusz Dereniowski and Dorota Osula},
keywords = {Elimination tree, Graph algorithms, Graph ranking, Parallel assembly, Parallel processing, Tree-depth},
abstract = {The tree-depth problem can be seen as finding an elimination tree of minimum height for a given input graph G. We introduce a bicriteria generalization in which additionally the width of the elimination tree needs to be bounded by some input integer b. We are interested in the case when G is the line graph of a tree, proving that the problem is NP-hard and obtaining a polynomial-time additive 2b-approximation algorithm. This particular class of graphs received significant attention in the past, mainly due to a number of potential applications, e.g. in parallel assembly of modular products, or parallel query processing in relational databases, as well as purely combinatorial applications including searching in tree-like partial orders (which in turn generalizes binary search on sorted data).}
}

@article{Dereniowski2023Edge,
  author    = {Dariusz Dereniowski and Przemysław Gordinowicz and Paweł Prałat},
  title     = {Edge and Pair Queries—Random Graphs and Complexity},
  journal   = {The Electronic Journal of Combinatorics},
  volume    = {30},
  number    = {2},
  year      = {2023},
  article   = {P2.34},
  doi       = {10.37236/11159},
  url       = {https://www.combinatorics.org/ojs/index.php/eljc/article/view/v30i2p34}
}

@InProceedings{EfficientNoisyBinarySearch,
author="Dereniowski, Dariusz
and {\L}ukasiewicz, Aleksander
and Uzna{\'{n}}ski, Przemys{\l}aw",
editor="Flocchini, Paola
and Moura, Lucia",
title="An Efficient Noisy Binary Search in Graphs via Median Approximation",
booktitle="Combinatorial Algorithms",
year="2021",
publisher="Springer International Publishing",
address="Cham",
pages="265--281",
abstract="Consider a generalization of the classical binary search problem in linearly sorted data to the graph-theoretic setting. The goal is to design an adaptive query algorithm, called a strategy, that identifies an initially unknown target vertex in a graph by asking queries. Each query is conducted as follows: the strategy selects a vertex q and receives a reply v: if q is the target, then {\$}{\$}v=q{\$}{\$}v=q, and if q is not the target, then v is a neighbor of q that lies on a shortest path to the target. Furthermore, there is a noise parameter {\$}{\$}0{\backslash}le p<{\backslash}frac{\{}1{\}}{\{}2{\}}{\$}{\$}0≤p<12which means that each reply can be incorrect with probability p. The optimization criterion to be minimized is the overall number of queries asked by the strategy, called the query complexity. The query complexity is well understood to be {\$}{\$}{\backslash}mathcal {\{}O{\}}({\backslash}varepsilon ^{\{}-2{\}}{\backslash}log n){\$}{\$}O($\epsilon$-2logn)for general graphs, where n is the order of the graph and {\$}{\$}{\backslash}varepsilon ={\backslash}frac{\{}1{\}}{\{}2{\}}{\backslash},-{\backslash},p{\$}{\$}$\epsilon$=12-p. However, implementing such a strategy is computationally expensive, with each query requiring possibly {\$}{\$}{\backslash}mathcal {\{}O{\}}(n^2){\$}{\$}O(n2)operations.",
isbn="978-3-030-79987-8"
}

@inproceedings{kseparator,
  TITLE = {{The k-separator problem}},
  AUTHOR = {Ben-Ameur, Walid and Mohamed, Mohamed S. A. and Neto, Jos{\'e}},
  URL = {https://hal.science/hal-00843860},
  BOOKTITLE = {{COCOON '13 : 19th International Computing \& Combinatorics Conference}},
  ADDRESS = {Hangzhou, China},
  HAL_LOCAL_REFERENCE = {13303},
  PUBLISHER = {{Springer}},
  VOLUME = {7936},
  PAGES = {337-348},
  YEAR = {2013},
  MONTH = Jun,
  DOI = {10.1007/978-3-642-38768-5\_31},
  KEYWORDS = {Graph partitioning ; Complexity theory ; Optimization ; Approximation algorithms ; Communication networks},
  HAL_ID = {hal-00843860},
  HAL_VERSION = {v1},
}

@inproceedings{Improvedapproximationalgorithmsvertexseparators,
author = {Feige, Uriel and Hajiaghayi, MohammadTaghi and Lee, James R.},
title = {Improved approximation algorithms for minimum-weight vertex separators},
year = {2005},
isbn = {1581139608},
publisher = {Association for Computing Machinery},
address = {New York, NY, USA},
url = {https://doi.org/10.1145/1060590.1060674},
doi = {10.1145/1060590.1060674},
abstract = {We develop the algorithmic theory of vertex separators, and its relation to the embeddings of certain metric spaces. Unlike in the edge case, we show that embeddings into L1 (and even Euclidean embeddings) are insufficient, but that the additional structure provided by many embedding theorems does suffice for our purposes.We obtain an O(√log n) approximation for min-ratio vertex cuts in general graphs, based on a new semidefinite relaxation of the problem, and a tight analysis of the integrality gap which is shown to be Θ(√log n). We also prove various approximate max-flow/min-vertex-cut theorems, which in particular give a constant-factor approximation for min-ratio vertex cuts in any excluded-minor family of graphs. Previously, this was known only for planar graphs, and for general excluded-minor families the best-known ratio was O(log n).These results have a number of applications. We exhibit an O(√log n) pseudo-approximation for finding balanced vertex separators in general graphs. In fact, we achieve an approximation ratio of O(√log opt) where opt is the size of an optimal separator, improving over the previous best bound of O(log opt). Likewise, we obtain improved approximation ratios for treewidth: In any graph of treewidth k, we show how to find a tree decomposition of width at most O(k √log k), whereas previous algorithms yielded O(k log k). For graphs excluding a fixed graph as a minor (which includes, e.g., bounded genus graphs), we give a constant-factor approximation for the treewidth; this can be used to obtain the first polynomial-time approximation schemes for problems like minimum feedback vertex set and minimum connected dominating set in such graphs.},
booktitle = {Proceedings of the Thirty-Seventh Annual ACM Symposium on Theory of Computing},
pages = {563–572},
numpages = {10},
keywords = {vertex separators, treewidth, semidefinite programming, metric embeddings, approximation algorithms},
location = {Baltimore, MD, USA},
series = {STOC '05}
}

@article{RankingsofGraphs,
author = {Bodlaender, Hans L. and Deogun, Jitender S. and Jansen, Klaus and Kloks, Ton and Kratsch, Dieter and M\"{u}ller, Haiko and Tuza, Zsolt},
title = {Rankings of Graphs},
journal = {SIAM Journal on Discrete Mathematics},
volume = {11},
number = {1},
pages = {168-181},
year = {1998},
doi = {10.1137/S0895480195282550},

URL = { 
    
        https://doi.org/10.1137/S0895480195282550
    
    

},
eprint = { 
    
        https://doi.org/10.1137/S0895480195282550
    
    

}
,
    abstract = { A vertex (edge) coloring \$\phi:V\rightarrow \{1,2,\ldots ,t\}\$ (\$\phi':E\rightarrow \{1,2,\ldots,\$ \$t\}\$) of a graph G=(V,E) is a vertex (edge) t-ranking if, for any two vertices (edges) of the same color, every path between them contains a vertex (edge) of larger color. The {\em vertex ranking number} \$\chi\_{r}(G)\$ ({\em edge ranking number} \$\chi\_{r}'(G)\$) is the smallest value of t such that G has a vertex (edge) t-ranking. In this paper we study the algorithmic complexity of the {\sc Vertex Ranking} and {\sc Edge Ranking} problems. It is shown that \$\chi\_{r}(G)\$ can be computed in polynomial time when restricted to graphs with treewidth at most k for any fixed k. We characterize the graphs where the vertex ranking number \$\chi\_{r}\$ and the chromatic number \$\chi\$ coincide on all induced subgraphs, show that \$\chi\_{r}(G)=\chi (G)\$ implies \$\chi (G)=\omega (G)\$ (largest clique size), and give a formula for \$\chi\_{r}'(K\_n)\$. }
}

@misc{szyfelbein2025searchingtreeskupmodularweight,
      title={Searching in trees with k-up-modular cost functions}, 
      author={Michał Szyfelbein},
      year={2025},
      eprint={2504.17887},
      archivePrefix={arXiv},
      primaryClass={cs.DS},
      url={https://arxiv.org/abs/2504.17887}, 
}

@article{BODLAENDER1995238,
title = {Approximating Treewidth, Pathwidth, Frontsize, and Shortest Elimination Tree},
journal = {Journal of Algorithms},
volume = {18},
number = {2},
pages = {238-255},
year = {1995},
issn = {0196-6774},
doi = {https://doi.org/10.1006/jagm.1995.1009},
url = {https://www.sciencedirect.com/science/article/pii/S0196677485710097},
author = {H.L. Bodlaender and J.R. Gilbert and H. Hafsteinsson and T. Kloks},
abstract = {Various parameters of graphs connected to sparse matrix factorization and other applications can be approximated using an algorithm of Leighton et al. that finds vertex separators of graphs. The approximate values of the parameters, which include minimum front size, treewidth, pathwidth, and minimum elimination tree height, are no more than O(log n) (minimum front size and treewidth) and O(log2n) (pathwidth and minimum elimination tree height) times the optimal values. In addition, we show that unless P = NP there are no absolute approximation algorithms for any of the parameters.}
}

@phdthesis{Berendsohn2024,
author = {Berendsohn, Benjamin Aram},
year = {2024},
title = {Search trees on graphs},
type = {Dissertation},
url = "http://dx.doi.org/10.17169/refubium-45704",
}

@article{OptimalSinT,
author = {Ben-Asher, Yosi and Farchi, Eitan and Newman, Ilan},
title = {Optimal Search in Trees},
journal = {SIAM Journal on Computing},
volume = {28},
number = {6},
pages = {2090-2102},
year = {1999},
doi = {10.1137/S009753979731858X},

URL = { 
    
        https://doi.org/10.1137/S009753979731858X
    
    

},
eprint = { 
    
        https://doi.org/10.1137/S009753979731858X
    
    

}
,
    abstract = { It is well known that the optimal solution for searching in a finite total order set is binary search. In binary search we divide the set into two "halves" by querying the middle element and continue the search on the suitable half. What is the equivalent of binary search when the set P is partially ordered? A query in this case is to a point \$x\in P\$, with two possible answers: "yes" indicates that the required element is "below" x or "no" if the element is not below x. We show that the problem of computing an optimal strategy for search in posets that are tree-like (or forests) is polynomial in the size of the tree and requires at most O(n4 log3n ) steps. Optimal solutions of such search problems are often needed in program testing and debugging, where a given program is represented as a tree and a bug should be found using a minimal set of queries. This type of search is also applicable in searching classified large tree-like databases (e.g., the Internet). }
}

@InProceedings{Dereniowski2003CholeskyFactofMx,
author="Dereniowski, Dariusz
and Kubale, Marek",
editor="Wyrzykowski, Roman
and Dongarra, Jack
and Paprzycki, Marcin
and Wa{\'{s}}niewski, Jerzy",
title="Cholesky Factorization of Matrices in Parallel and Ranking of Graphs",
booktitle="Parallel Processing and Applied Mathematics",
year="2004",
publisher="Springer Berlin Heidelberg",
address="Berlin, Heidelberg",
pages="985--992",
abstract="The vertex ranking problem is closely related to the problem of finding the elimination tree of minimum height for a given graph. This implies that the problem has applications in the parallel Cholesky factorization of matrices. We describe the connection between this model of graph coloring and the matrix factorization. We also present a polynomial time algorithm for finding edge ranking of complete bipartite graphs. We use it to design an O(m2{\thinspace}+{\thinspace}d) algorithm for edge ranking of graphs obtained by removing O(log m) edges from a complete bipartite graph, where d is a fixed number. Then we extend our results to complete k-partite graphs for any fixed k>2. In this way we give a new class of matrix factorization instances that can be optimally solved in polynomial time.",
isbn="978-3-540-24669-5"
}

@article{OnAGPartWAppVLSI,
title = {On a graph partition problem with application to VLSI layout},
journal = {Information Processing Letters},
volume = {43},
number = {2},
pages = {87-94},
year = {1992},
issn = {0020-0190},
doi = {https://doi.org/10.1016/0020-0190(92)90017-P},
url = {https://www.sciencedirect.com/science/article/pii/002001909290017P},
author = {Arunabha Sen and Haiyong Deng and Sumanta Guha},
keywords = {Combinatorial problems, graph partition, permutation graph, circle graph, VLSI layout}
}

@article{OnMinERSTs,
title = {On Minimum Edge Ranking Spanning Trees},
journal = {Journal of Algorithms},
volume = {38},
number = {2},
pages = {411-437},
year = {2001},
issn = {0196-6774},
doi = {https://doi.org/10.1006/jagm.2000.1143},
url = {https://www.sciencedirect.com/science/article/pii/S019667740091143X},
author = {Kazuhisa Makino and Yushi Uno and Toshihide Ibaraki},
keywords = {edge ranking, minimum edge ranking spanning tree, approximation algorithm, NP-completeness},
abstract = {In this paper, we introduce the problem of computing a minimum edge ranking spanning tree (MERST); i.e., find a spanning tree of a given graph G whose edge ranking is minimum. Although the minimum edge ranking of a given tree can be computed in polynomial time, we show that problem MERST is NP-hard. Furthermore, we present an approximation algorithm for MERST, which realizes its worst case performance ratiominΔ*−1logn/Δ*,Δ*−1logΔ*+1−1where n is the number of vertices in G and Δ* is the maximum degree of a spanning tree whose maximum degree is minimum. Although the approximation algorithm is a combination of two existing algorithms for the restricted spanning tree problem and for the minimum edge ranking problem of trees, the analysis is based on novel properties of the edge ranking of trees.}
}

@InProceedings{MinERSTrofTGs,
author="Makino, Kazuhisa
and Uno, Yushi
and Ibaraki, Toshihide",
editor="Bose, Prosenjit
and Morin, Pat",
title="Minimum Edge Ranking Spanning Trees of Threshold Graphs",
booktitle="Algorithms and Computation",
year="2002",
publisher="Springer Berlin Heidelberg",
address="Berlin, Heidelberg",
pages="428--440",
abstract="Given a graph G, the minimum edge ranking spanning tree problem (MERST) is to find a spanning tree of G whose edge ranking is minimum. However, this problem is known to be NP-hard for general graphs. In this paper, we show that the problem MERST has a polynomial time algorithm for threshold graphs, which have useful applications in practice. The result is also significant in the sense that this is a first non-trivial graph class for which the problem MERST is found to be polynomially solvable.",
isbn="978-3-540-36136-7"
}

@techreport{ParAofModPs,
  author       = {A. V. Iyer and H. D. Ratliff and G. Vijayan},
  title        = {Parallel Assembly of Modular Products—An Analysis},
  institution  = {Georgia Institute of Technology},
  number       = {Tech.\ Report 88-06},
  year         = {1988},
  address      = {Atlanta, Georgia}
}

@article{HCObjFsandAlgs,
author = {Cohen-addad, Vincent and Kanade, Varun and Mallmann-trenn, Frederik and Mathieu, Claire},
title = {Hierarchical Clustering: Objective Functions and Algorithms},
year = {2019},
issue_date = {August 2019},
publisher = {Association for Computing Machinery},
address = {New York, NY, USA},
volume = {66},
number = {4},
issn = {0004-5411},
url = {https://doi.org/10.1145/3321386},
doi = {10.1145/3321386},
abstract = {Hierarchical clustering is a recursive partitioning of a dataset into clusters at an increasingly finer granularity. Motivated by the fact that most work on hierarchical clustering was based on providing algorithms, rather than optimizing a specific objective, Dasgupta framed similarity-based hierarchical clustering as a combinatorial optimization problem, where a “good” hierarchical clustering is one that minimizes a particular cost function [23]. He showed that this cost function has certain desirable properties: To achieve optimal cost, disconnected components (namely, dissimilar elements) must be separated at higher levels of the hierarchy, and when the similarity between data elements is identical, all clusterings achieve the same cost.We take an axiomatic approach to defining “good” objective functions for both similarity- and dissimilarity-based hierarchical clustering. We characterize a set of admissible objective functions having the property that when the input admits a “natural” ground-truth hierarchical clustering, the ground-truth clustering has an optimal value. We show that this set includes the objective function introduced by Dasgupta.Equipped with a suitable objective function, we analyze the performance of practical algorithms, as well as develop better and faster algorithms for hierarchical clustering. We also initiate a beyond worst-case analysis of the complexity of the problem and design algorithms for this scenario.},
journal = {J. ACM},
month = jun,
articleno = {26},
numpages = {42},
keywords = {stochastic block model, PCA, Hierarchical clustering}
}
\appendix

\section{Related work}\label{relatedwork}
The Graph Search Problem and its variants are related to multiple independently studied problems. These include, among others: 
\begin{itemize}
    \item Binary Search \cite{OnakParys2006GenOfBSSInTsAndFLikePosets,dereniowski2017ApproxSsForGeneralBSinWTs,Deligkas2019BsInGsRev,Emamjomeh2016DetAndProbBSinGs,dereniowski2022CFApproxAlgForBSInTsWithMonoQTimes,dereniowski2024SInTsMonoQTs,noisyBSSFC,Dereniowski2024OnMG,EfficientNoisyBinarySearch,Dereniowski2023Edge},
    \item Tree Search Problem \cite{Jacobs2010OnTheComplexSearchInTsAvg,Cicalese2014ImprovedApproxAvgTs,Cicalese2016OnTSPwNonUniCost}, 
    \item Binary Identification Problem \cite{Cicalese2012BinIdentPForWTs}, 
    \item Ranking Colorings \cite{Knuth1973,Dereniowski2009ERankOfWTs,DereniowskiERAndSInPOSets,DereniowskiEfPQProcByGRank,DereniowskiVxRankOfChGsAndWTs,Lam1998ERankOfGsIsH}, 
    \item Ordered Colorings \cite{KATCHALSKI1995141}, 
    \item Elimination Trees \cite{Pothen1988OptimalEliminationTrees}, 
    \item Hub Labeling \cite{Angelidakis2018ShortestPQ},
    \item Tree-Depth \cite{NESETRIL20061022,BOROWIECKI2023113682},
    \item Partition Trees \cite{Hgemo2024TightAB},
    \item Hierarchical Clustering \cite{Acostfunctionforsimilaritybasedhierarchicalclustering,HCObjFsandAlgs,Approximatehierarchicalclusteringviasparsestcutandspreadingmetrics}, 
    \item Search Trees on Trees \cite{SplayTonT,Fast_app_centroid_trees}, 
    \item LIFO-Search \cite{GIANNOPOULOU20122089}. 
\end{itemize}

All of the above definitions are equivalent when the input graph is a tree, but may differ for general graphs.

A variant of the Graph Search Problem can also be formulated in which queries 
are performed on edges rather than vertices. 
In what follows, we summarize the most important and relevant results, 
organized according to the query model and the objective function\footnote{Note that the notion of vertex weights is relevant only in the average-case setting.}.
\subsubsection{Vertex queries, worst-case cost}
When the input graph is a tree and all costs are uniform, the problem is solvable in linear time \cite{Schaffer1989OptNodeRankOfTsInLinTime,OnakParys2006GenOfBSSInTsAndFLikePosets} and $O\br{\log n}$ queries always suffice. Beyond trees, the problem is known to be NP-hard even in: chordal graphs \cite{DereniowskiVxRankOfChGsAndWTs}, bipartite and co-bipartite graphs \cite{RankingsofGraphs}. By combining the results of \cite{BODLAENDER1995238} and \cite{Improvedapproximationalgorithmsvertexseparators} one can obtain a general $O\br{\log^{3/2}n}$-approximation algorithm. Additionally, the problem is solvable in polynomial time for graphs with bounded treewidth \cite{RankingsofGraphs}. 

In the case of non-uniform costs, the problem is known to be NP-hard even for trees 
\cite{Dereniowski2009ERankOfWTs,Cicalese2012BinIdentPForWTs,Cicalese2016OnTSPwNonUniCost}, 
for which there exists an $O\br{\sqrt{\log n}}$-approximation algorithm 
\cite{dereniowski2017ApproxSsForGeneralBSinWTs}. 
Further improvements are possible for restricted classes of the cost function. 
These results include:
\begin{itemize}
    \item A parametrized PTAS running in $O\br{\br{cn/\epsilon}^{2c/\epsilon}}$ time, where $c$ is the largest cost \cite{DereniowskiVxRankOfChGsAndWTs}, 
    \item A 2-approximation algorithm for down-monotonic cost functions \cite{dereniowski2022CFApproxAlgForBSInTsWithMonoQTimes},
    \item An 8-approximation algorithm for up-monotonic cost functions \cite{dereniowski2022CFApproxAlgForBSInTsWithMonoQTimes,dereniowski2024SInTsMonoQTs},
    \item A parametrized $O\br{\log\log n}$-approximation algorithm for $k$-up-modular cost functions, running in $k^{O\br{\log k}}\cdot\text{poly}\br{n}$ time \cite{szyfelbein2025searchingtreeskupmodularweight}.
\end{itemize}

For general graphs with non-uniform costs, the problem can be reduced to the uniform-cost case 
by rounding all costs to polynomial values and applying a reduction similar to that in 
\cite{DereniowskiVxRankOfChGsAndWTs}, which yields an $O\br{\log^{3/2} n}$-approximation.

\subsubsection{Vertex queries, average-case cost}

For trees with non-uniform weights and uniform query costs, a PTAS running in 
$O\br{n^{2/\epsilon+1}}$ time exists \cite{SplayTonT}. 
This was later improved to an FPTAS with running time 
$O\br{\br{1/\epsilon}^{2/\log_23}\cdot n^{1+4/\log_23}\cdot \log^2\br{n/\epsilon}}$ 
\cite{Berendsohn2024}. 
Every optimal decision tree has height at most $O\br{\log w\br{T}}$, and a simple 
decision tree that always queries the weighted centroid achieves a 2-approximation 
\cite{Fast_app_centroid_trees}. 
Beyond trees, the problem is NP-hard even for graphs with treewidth at most 15, 
as well as for dense graphs with uniform weights \cite{Berendsohn2024}. 

\subsubsection{Edge queries, worst-case cost}

For trees with uniform costs, the problem is solvable in linear time 
\cite{Lam1998ERankOfGsIsH,Mozes_Onak2008FindOptTSStartInLinTime}, 
and at most 
$\frac{\Delta\br{T}-1}{\log\br{\Delta\br{T}+1}-1}\cdot \log n$ queries always suffice 
\cite{Emamjomeh2016DetAndProbBSinGs}. 
For general graphs, the problem is known to be NP-hard \cite{Lam1998ERankOfGsIsH}. 
When query costs are non-uniform, an $O\br{\log n}$-approximation was given 
\cite{Dereniowski2009ERankOfWTs}, later improved to 
$O\br{\log n/\log\log\log n}$ \cite{Cicalese2012BinIdentPForWTs}, then to 
$O\br{\log n/\log\log n}$, and finally, by reduction from the vertex version of the 
problem, to $O\br{\sqrt{\log n}}$ \cite{dereniowski2017ApproxSsForGeneralBSinWTs}.
\subsubsection{Edge queries, average-case cost}

For trees with uniform costs, the problem is known to be (weakly) NP-hard 
even for trees with diameter at most $4$ and for trees with degree at most $16$ 
\cite{Jacobs2010OnTheComplexSearchInTsAvg}. 
Every optimal decision tree has height at most $O\br{\Delta \log w\br{T}}$, and there exists 
an FPTAS running in $\text{poly}\br{n^{\Delta\br{T}}/\epsilon}$ time. 
A simple greedy algorithm that always queries the edge which splits the weights most evenly 
achieves a 2-approximation \cite{Jacobs2010OnTheComplexSearchInTsAvg}, later improved to $\phi$ in \cite{Cicalese2014ImprovedApproxAvgTs} and $3/2$ in 
\cite{Hgemo2024TightAB} which is tight. 

For uniform weights and non-uniform costs, there exists an $6.75$-approximation algorithm for trees 
and an $O\br{\sqrt{\log n}}$-approximation algorithm for general graphs 
\cite{HCObjFsandAlgs,Approximatehierarchicalclusteringviasparsestcutandspreadingmetrics}. 
On the hardness side, the problem cannot be approximated within any constant factor 
if the \textit{Small Set Expansion Hypothesis} holds 
\cite{Approximatehierarchicalclusteringviasparsestcutandspreadingmetrics}, even when all costs are uniform.

\section{Hardness of the Tree Search Problem}\label{hardness}
In the decision version of the Tree Search Problem, one is asked to determine whether, 
for a given instance $\br{T, c, w}$, there exists a decision tree of cost at most $K$.
\begin{theorem}
    The Decision Tree Search Problem is NP-complete even when restricted to trees with $\Delta\br{T}\leq 16$ and to trees with $\diam\br{T}\leq 8$.
\end{theorem}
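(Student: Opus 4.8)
The plan is to prove membership in $\mathcal{NP}$ directly and then establish hardness by a subdivision reduction from the average-case \emph{edge}-query Tree Search Problem, which is (weakly) $\mathcal{NP}$-hard on trees of diameter at most $4$ and, separately, on trees of degree at most $16$ \cite{Jacobs2010OnTheComplexSearchInTsAvg}. Membership is immediate: a decision tree is a linear-size certificate, and by Lemma~\ref{contributionLemma} its cost $\sum_{v} w\br{G_{D,v}}\cdot c\br{v}$, together with its validity (each node's children correspond to distinct components of $G-v$ within the current candidate subgraph), can be checked in polynomial time. Hence the problem lies in $\mathcal{NP}$.

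For hardness, given an edge-query instance $\br{T,w}$ with uniform edge costs, I would form $T'$ by subdividing every edge $e$ with a fresh vertex $m_e$, setting $c'\br{m_e}=1$ and $w'\br{m_e}=0$, keeping $w'\br{v}=w\br{v}$ on original vertices, and assigning each original vertex a prohibitively large cost $c'\br{v}=M$ for a sufficiently large $M$ (e.g.\ exceeding $w\br{T}\cdot\spr{E\br{T}}$). The point of the construction is that a query to $m_e$ in the vertex model reveals exactly which of the two components of $T-e$ contains the target, which is precisely the information an edge query to $e$ returns, while the additional ``$m_e$ is the target'' branch is a weight-$0$ leaf contributing nothing to the cost. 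Subdivision doubles all pairwise distances and leaves every original degree unchanged, so applying it to the diameter-$\leq 4$ family yields trees with $\diam\br{T'}\leq 8$, and applying it to the degree-$\leq 16$ family yields trees with $\Delta\br{T'}\leq 16$; these two instantiations of the same construction give the two claimed restrictions separately.

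The core of the argument is the equality of the optimal vertex-query cost on $T'$ and the optimal edge-query cost on $T$. In the forward direction I would translate any edge-query decision tree for $T$ verbatim, replacing each edge query $e$ by a query to $m_e$: the root-to-$v$ path of any original target $v$ then consists of exactly the subdivision vertices corresponding to the edges queried on its root-to-$v$ path in $T$, so each query to $m_e$ partitions the candidate set exactly as the edge query $e$ would, the two search costs for $v$ coincide, and the weight-$0$ subdivision targets are absorbed into leaves at no cost. For the reverse direction I would first argue, using the choice of $M$, that no optimal vertex-query strategy ever queries an original vertex: such a query lies on the search path of at least one positive-weight target (the vertex itself) and hence contributes at least $M$, which already exceeds the cost of the all-subdivision strategy. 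Consequently an optimal vertex-query tree queries only subdivision vertices and thus reinterprets directly as an edge-query tree of the same cost, so the decision threshold $K$ can be kept unchanged across the reduction.

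The main obstacle I anticipate is making this cost correspondence \emph{exact} rather than merely approximate. Two points require care. First, I must verify that after deleting the weight-$0$ ``found'' leaves the surviving structure is a syntactically valid edge-query decision tree, i.e.\ that original vertices are isolated in $T'$ precisely when the corresponding edge queries isolate them in $T$. Second, if the convention $w:V\to\mathbb{N}$ forbids the value $0$, I would instead set $w'\br{m_e}=1$ and scale the original weights by a large factor, then show that the total contribution of the subdivision vertices is confined to a controlled sub-threshold additive term, so that the bound $K$ of the decision version can still be chosen to separate yes- and no-instances cleanly.
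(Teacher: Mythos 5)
Your construction is essentially the paper's (subdivide every edge, make subdivision vertices cheap and weightless, make original vertices expensive), but your cost accounting is incompatible with the model actually being used, and this breaks the reduction. In this paper a decision tree satisfies $V\br{D}=V\br{G}$, and the cost of locating a target $x$ sums the costs of \emph{all} queries on the root-to-$x$ path in $D$ --- including the query to $x$ itself. So every original vertex $v$ with $w\br{v}>0$ necessarily contributes at least $w\br{v}\cdot c'\br{v}=w\br{v}\cdot M$ to the cost of \emph{any} valid decision tree for $T'$. Your ``all-subdivision strategy'' is not a valid decision tree in this model, and your key claim that ``no optimal vertex-query strategy ever queries an original vertex'' cannot hold: every strategy queries every original vertex. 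Consequently, with your choice $M>w\br{T}\cdot\spr{E\br{T}}$ and an \emph{unchanged} threshold $K$, every transformed instance has optimal cost at least $w\br{T}\cdot M>K$, i.e.\ your reduction maps yes-instances of the edge-query problem to no-instances of the vertex-query problem. The exact cost correspondence you aim for (optimal vertex cost on $T'$ equals optimal edge cost on $T$) is simply false here; what is true is an offset correspondence.

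The repair is precisely where the paper diverges from you: keep the expensive costs on original vertices (the paper uses $c\br{v}=K+1$), but shift the threshold to $K'=K+w\br{T}\cdot\br{K+1}$. The additive term pays for the unavoidable self-queries. In the forward direction, one translates an edge-query tree of cost at most $K$ into a vertex-query tree by querying subdivision vertices and then appending, below the leaves, the mandatory query to each original vertex (each such query is performed when that vertex is the sole candidate, contributing exactly $w\br{v}\cdot\br{K+1}$). In the reverse direction, a budget of $K'$ forces each original vertex to appear \emph{exactly once} on its own search path --- as the final query, when the candidate set is the singleton $\brc{v}$ --- since a second expensive query on any positive-weight path would exceed $K'$; deleting these queries and reading subdivision queries as edge queries yields an edge-query tree of cost at most $K'-w\br{T}\cdot\br{K+1}=K$. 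Your diameter/degree bookkeeping (subdivision doubles diameter from $4$ to $8$ and preserves maximum degree $16$) matches the paper and is fine; the missing idea is the threshold shift that absorbs the mandatory self-query costs.
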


    \begin{proof}
        The problem is in NP since, given a decision tree $D$, one can verify in polynomial time 
whether all the requirements are satisfied.

To show hardness, we use a black-box reduction from the edge-query, uniform-cost, 
and non-uniform-weight variant, which is NP-complete even when restricted to trees with $\Delta\br{T}\leq 16$ and to trees with $\diam\br{T}\leq 4$ \cite{Jacobs2010OnTheComplexSearchInTsAvg}. 
Let $\br{T, w, K}$ be such an instance. 
We construct a new instance $\br{T', c, w, K'}$ for the TSP as follows: 
for every $v \in V\br{T}$, we set $c\br{v} = K+1$. 
We subdivide each edge $e \in E\br{T}$ by adding a new vertex $v_e$ with 
$w\br{v_e} = 0$ and $c\br{v_e} = 1$. 
We set $K' = K + w\br{T} \cdot \br{K+1}$.

Assume that we have a decision tree $D$ of cost at most $K$ for the original instance. 
To obtain a decision tree $D'$ for the new instance, we replace each query in $D$ 
with a query to the vertex that subdivides the corresponding edge. 
Additionally, below each leaf of $D$, we attach the appropriate queries to the original vertices. 
As $D$ contains a query to every edge of $T$, each vertex is separated, so for every 
$v \in V\br{T}$, one such additional query is added. 
This results in a decision tree $D'$ of cost at most $K + w\br{T} \cdot \br{K+1}= K'$, 
as required.

Observe that in the new instance, for every vertex $v \in T$, the cost of searching for $v$ 
is at least $K+1$ since $c\br{v} = K+1$. 
Therefore, for these vertices, at least $w\br{T} \cdot \br{K+1}$ cost is required. 
This implies that each such vertex has exactly one such query in its query sequence, 
namely the query to $v$ itself. Otherwise, the cost would exceed $K'$, and we conclude 
that every such $v$ is queried only when the candidate subset consists solely of $v$. 

Conversely, assume there exists a decision tree $D'$ of cost at most $K'$ for the new instance. 
We show how to obtain a decision tree $D$ for the original instance. 
We replace each query to a vertex $v_e$ with a query to edge $e$, 
and delete all queries to vertices $v \in V\br{T}$. 
Since each $v \in V\br{T}$ was the last query in $Q_{D'}\br{T', v}$, performed 
when the candidate set consisted only of $v$, the resulting $D$ is a valid decision tree 
for the original instance. Additionally, the cost of $D$ is at most $K' - w\br{T} \cdot \br{K+1} = K$, 
as required.

Finally, note that subdividing each edge doubles the diameter of the tree while leaving the degree unchanged, 
so the claim follows.

    \end{proof}

\section{Pseudoexact algorithm for the Weighted $\alpha$-Separator Problem}\label{separatorFPTAS}

We devise a dynamic programming procedure similar to the one in \cite{kseparator} and combine it with a rounding trick to obtain a bi-criteria FPTAS.
Note that the authors considered only the case in which all weights are uniform. However, we generalize their algorithm to arbitrary integer weights and introduce an additional case that 
was previously lacking\footnote{Probably due to an oversight.}.

\begin{theorem}\label{separator}
    Let $T$ be a tree. 
    There exists an optimal algorithm for the Weighted $\alpha$-Separator Problem running in 
    $O\br{n\cdot \br{w\br{T}/\alpha}^2}$ time.
\end{theorem}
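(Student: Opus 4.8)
The plan is to solve the problem exactly with a bottom-up dynamic program on the rooted tree. First I would root $T$ at an arbitrary vertex $r$ and set the threshold $W=\fl{w\br{T}/\alpha}$; because all weights are integers, the feasibility constraint $w\br{H}\le w\br{T}/\alpha$ is equivalent to $w\br{H}\le W$, so only the $W+1=O\br{w\br{T}/\alpha}$ integer values in $\brc{0,1,\dots,W}$ ever need to be tracked as component weights. For each vertex $v$ I maintain, over its subtree $T_v$, a table $A\br{v,j}$ equal to the least cost $c\br{S}$ of a separator $S$ of $T_v$ with $v\notin S$ whose component containing $v$ has weight exactly $j\le W$ and all of whose other components have weight at most $W$, together with a scalar $B\br{v}$ equal to the least cost of a separator $S$ of $T_v$ with $v\in S$ in which every component has weight at most $W$. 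I also abbreviate $D\br{v}=\min\br{B\br{v},\min_{0\le j\le W}A\br{v,j}}$, the cost of the best fully feasible separator of $T_v$ with no restriction on the status of $v$.

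The recurrences process children one at a time, matching the $T_{v,i}$ notation of Section \ref{notionsAndPreliminaries} and turning each vertex into a sequence of min-plus merges. The initialisation, before any child is folded in, is $A_0\br{v,w\br{v}}=0$ and $B_0\br{v}=c\br{v}$. When folding child $c_i$ into the $v\notin S$ table I take the better of two options: either $c_i\in S$, which blocks the subtree and costs $B\br{c_i}$ while leaving $v$'s component unchanged, or $c_i\notin S$, which forces $c_i$'s component to merge into $v$'s, giving $A_i\br{v,j}=\min_{j'}\br{A_{i-1}\br{v,j-j'}+A\br{c_i,j'}}$ restricted to $j\le W$. For the $v\in S$ table the vertex $v$ itself severs every child, so each child is resolved independently and $B_i\br{v}=B_{i-1}\br{v}+D\br{c_i}$. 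The answer to the instance is $D\br{r}$, and the separator is reconstructed with standard back-pointers.

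Correctness I would establish by induction on subtree size, using that $T-S$ is a laminar family of components and that the component containing the current root $v$ is the only one that can still grow, so capping the index $j$ at $W$ enforces the bound on exactly that component while the already-closed components are certified by $B\br{\cdot}$ and $D\br{\cdot}$. Relative to the uniform-weight procedure of \cite{kseparator}, the case that must be added is the one hidden inside $D\br{c_i}$, namely allowing a child to stay outside $S$ and form its own standalone component of weight at most $W$ when it is cut off from above by its parent entering $S$ (the $\min_{j'}A\br{c_i,j'}$ alternative); with non-uniform costs an optimal solution may well prefer this to paying $c\br{c_i}$.

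The running time is the main point to get right. For each vertex $v$ and each child $c_i$ the merge is a min-plus convolution of two arrays of length $O\br{W}$, costing $O\br{W^2}$, and since every vertex is the child of exactly one parent we have $\sum_v\deg^+\br{v}=n-1$, so the total work telescopes to $O\br{n\cdot W^2}=O\br{n\cdot\br{w\br{T}/\alpha}^2}$. The obstacle I expect to spend the most care on is precisely this charging argument together with the bookkeeping of the ``exactly weight $j$'' index: I must verify that the array length stays $O\br{w\br{T}/\alpha}$ at every node, that each feasible separator is generated exactly once by the merges so the $\min$ is over the correct set, and that capping at $W$ never silently discards a separator that could still have been completed feasibly higher up in the tree.
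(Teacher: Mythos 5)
Your proposal is correct and follows essentially the same route as the paper's proof: the tables $A\br{v,j}$, $B\br{v}$, $D\br{v}$ correspond exactly to the paper's $C_v^{out}\br{w}$, $C_v^{in}$, $C_v$, the child-by-child min-plus merge is the paper's recurrence over $T_{v,i}$ (your initialisation $A_0\br{v,w\br{v}}=0$ just subsumes its explicit $i=1$ base case, including the zero-weight term $C_{c_1}^{out}\br{0}$), and the charging of one $O\br{W^2}$ merge per edge gives the same $O\br{n\cdot\br{w\br{T}/\alpha}^2}$ bound. The capping concern you flag is benign for the same reason the paper's DP is sound: the component containing the current root can only grow as the recursion moves upward, so any entry exceeding $W$ can never be completed feasibly.
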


\begin{proof}
    Assume that the input tree is rooted at an arbitrary vertex $r\br{T}$. Let $k=\fl{w\br{T}/\alpha}$. We want to find a separator $S$ such that for every $H\in T-S$, $w\br{H}\leq k$.
    Let $C_{v}$ denote the cost of the optimal separator $S_v$ in $T_v$ with this property. 
    Define $C_{v}^{in}$ as the cost of the optimal separator for $T_v$, 
    under the condition that $v \in S_v$. 
    We immediately have:
    $$
    C_{v}^{in} = c\br{v}+\sum_{c\in \mathcal{C}_{T,v}}C_{c}.
    $$
    
    Assume that $v \notin S_v$. 
    Let $H_v \in T_v - S_v$ be the component containing $v$. 
    For every integer $0 \leq w \leq k$, let $C_v^{out}\br{w}$ be the cost of the optimal separator for $T_v$, 
    such that $v \notin S_v$ and $w\br{H_v} = w$. 
    Then:
    $$
    C_v = \min \brc{C_{v}^{in}, \min_{0 \leq w \leq k} C_v^{out}\br{w}}.
    $$
    
    For any vertex $v \in V\br{T}$ and any integer $1 \leq i \leq \deg_{T,v}^+$, 
    let $S_{v,i}$ be the optimal separator for $T_{v,i}$ and $H_{v,i} \in T_{v,i} - S_{v,i}$ be the component containing $v$.
    For any integer $0 \leq w \leq k$, let $C_{v,i}^{out}\br{w}$ be the cost of an optimal separator for $T_{v,i}$, 
    such that $v \notin S_{v,i}$ and $w\br{H_{v,i}} = w$. 
    Then
    $$
    C_{v}^{out}\br{w} = C_{v,\deg_{T,v}^+}^{out}\br{w}.
    $$
    
    For $i = 1$ we have:
    $$
    C_{v,1}^{out}\br{w} =
    \begin{cases}
        \infty, & \text{if } w < w\br{v},\\
        \min \brc{C_{c_1}^{in},C_{c_1}^{out}\br{0}}, & \text{if } w = w\br{v},\\
        C_{c_1}^{out}\br{w - w\br{v}}, & \text{if } w > w\br{v}.
    \end{cases}
    $$
    
    For $i > 1$:
    $$
    C_{v,i}^{out}\br{w} = 
    \min \brc{
        C_{v,i-1}^{out}\br{w} + C_{c_i}^{in}, 
        \min_{0 \leq j \leq w} \brc{ C_{v,i-1}^{out}\br{w-j} + C_{c_i}^{out}\br{j} }
    }.
    $$
    
    In the above, the first term of the outer minimum corresponds to the case $c_i \in S_{v,i}$, 
    so $H_{v,i} = H_{v,i-1}$. 
    The second term considers the alternative, checking all possible partitions of 
    the weight between $H_{v,i-1}$ and $H_{c_i}$.

    These relationships suffice to compute $C_{r\br{T}}$, the cost of the optimal separator $S$ for $T$. 
    Computation is performed in a bottom-up, left-to-right manner, starting from the leaves. 
    For a leaf $v$, we have $C_v^{in} = c\br{v}$ and:
    $$C_v^{out}\br{w} = \begin{cases}
        0, & \text{if } w = w\br{v}\leq k,\\
        \infty, & \text{otherwise.}
    \end{cases}$$ 

    Since each of the $C_v^{in}$ subproblems requires $O\br{\deg_{T,v}^+}$ computational steps we get that they require $O\br{n}$ running time. As there are $O\br{n\cdot k} = O\br{n \cdot w\br{T}/\alpha}$ remaining subproblems and each requires 
    at most $O\br{k}=O\br{w\br{T}/\alpha}$ computational steps, the running time is $O\br{n \cdot \br{w\br{T}/\alpha}^2}$.
\end{proof}

Note that, the running time of the above procedure depends on $w\br{T}$ which may not be polynomial. 
To alleviate this difficulty, we slightly relax the condition on the size of components 
in $T-S$ using a controlled parameter $\delta$. 
Based on this relaxation, we show how to construct a bicriteria FPTAS for the problem. Let $\delta>0$ be any fixed constant and let $\FSeparator$ be the dynamic programming procedure from Theorem \ref{separator}.
        The algorithm is as follows:
        
\begin{algorithm}[H]
\caption{The bicriteria FPTAS for the Weighted $\alpha$-separator Problem}
\label{createDecisionTree}
\SetKwFunction{FDecisionTree}{DecisionTree}
\SetKwProg{Fn}{Procedure}{:}{}
\Fn{$\FSeparatorFPTAS\br{T, c, w, \alpha, \delta}$}{
$K\gets\frac{\delta\cdot w\br{T}}{n\cdot \alpha}$.

\ForEach{ $v\in V\br{T}$}
{$w'\br{v} \gets \fl{\frac{w\br{v}}{K}}$.}

$\alpha'\gets\frac{\alpha\cdot K\cdot w'\br{T}}{w\br
T}$.

$S'\gets\FSeparator\br{T, c, w', \alpha'}$.

\Return $S'$.
}
\end{algorithm}
        \begin{lemma}
            Let $S$ be the optimal separator for the $\br{T, c, w, \alpha}$ instance. We have that $c\br{S'}\leq c\br{S}$.
        \end{lemma}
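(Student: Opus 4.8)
The plan is to argue that the cost optimum cannot increase under the rounding, by exhibiting the original optimal separator $S$ as a feasible solution for the rounded instance that \FSeparatorFPTAS\ actually solves. Concretely, $S'$ is the output of the exact procedure \FSeparator\ on the instance $\br{T, c, w', \alpha'}$, so by Theorem \ref{separator} it is a minimum-cost separator for that instance. Since the cost function $c$ is untouched by the rounding, it therefore suffices to show that $S$ is a feasible separator for $\br{T, c, w', \alpha'}$; this immediately gives $c\br{S'} \leq c\br{S}$.

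First I would make the feasibility condition of the rounded instance explicit. With $\alpha' = \frac{\alpha \cdot K \cdot w'\br{T}}{w\br{T}}$, the threshold that the dynamic program enforces is $\fl{w'\br{T}/\alpha'}$, and a direct simplification gives $w'\br{T}/\alpha' = \frac{w\br{T}}{\alpha \cdot K}$. Thus \FSeparator\ accepts any candidate $S$ whose components $H \in T - S$ satisfy $w'\br{H} \leq \fl{\frac{w\br{T}}{\alpha \cdot K}}$.

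Next I would verify that the original optimum $S$ satisfies exactly this bound. Feasibility of $S$ for $\br{T, c, w, \alpha}$ means $w\br{H} \leq w\br{T}/\alpha$ for every $H \in T - S$. Using the rounding inequality $w'\br{v} = \fl{w\br{v}/K} \leq w\br{v}/K$ and summing over $H$, I obtain $w'\br{H} \leq w\br{H}/K \leq \frac{w\br{T}}{\alpha \cdot K}$. Because every $w'\br{v}$ is an integer, $w'\br{H}$ is an integer too, and an integer bounded above by the real number $\frac{w\br{T}}{\alpha \cdot K}$ is in fact bounded by its floor $\fl{\frac{w\br{T}}{\alpha \cdot K}}$. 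Hence $S$ is feasible for the rounded instance, and the optimality of $S'$ yields $c\br{S'} \leq c\br{S}$.

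The calculation itself is routine; the one place that demands care is the interplay between the two floors, namely the per-vertex rounding $\fl{w\br{v}/K}$ and the integer cutoff $\fl{w'\br{T}/\alpha'}$ imposed by \FSeparator. The argument works precisely because $w'\br{H}$ is integer-valued, which lets the floor on the cutoff be absorbed without loss; I expect this to be the step most vulnerable to an off-by-one slip, so I would spell it out rather than leave it implicit.
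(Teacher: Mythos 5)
Your proof is correct and takes essentially the same route as the paper's: both arguments exhibit the original optimum $S$ as a feasible separator for the rounded instance $\br{T, c, w', \alpha'}$ and then invoke the optimality of $S'$ for that instance. The paper phrases the same chain of inequalities via an auxiliary instance with weights $w''\br{v} = K\cdot\fl{w\br{v}/K}$ and then rescales, whereas you compute directly with $w'$ and additionally spell out the integrality step absorbing the floor in the DP cutoff, which the paper leaves implicit.
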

            \begin{proof}
                We prove that $S$ is a valid separator for the $\br{T, c, w', \alpha'}$ instance, so that $c\br{S'}\leq c\br{S}$.
                To simplify the analysis, we will define the auxiliary instance: For every $v\in V\br
                {T}$, let $w''\br{v} = K\cdot\fl{\frac{w\br{v}}{K}} $. Additionally, let $ \alpha'' =\frac{\alpha \cdot w''\br{T}}{w\br{T}}$.
                
                In this new instance, for $v\in V\br{T}$ we have $w''\br{v}\leq w\br{v}$, so for every $H\in T-S$, 
                $$w''\br{H}\leq w\br{H}\leq w\br{T}/\alpha= w''\br{T}/\alpha''$$
                
                where the second inequality is by the definition of the $\alpha$-separator and the equality is by the definition of $\alpha''$.
                
                We conclude that $S$ is an $\alpha''$-separator for the auxiliary instance $\br{T, c, w'', \alpha''}$. Now notice that the $\br{T, c, w', \alpha'}$ instance has all of its weights scaled by a constant value of $K$, relatively to $\br{T, c, w'', \alpha''}$ and $\alpha' = \alpha''$. As multiplying weights by a constant does not influence the validity of a solution, $S$ is an $\alpha'$-separator for $\br{T, w', c, \alpha'}$ and the claim follows.
            \end{proof}
        \begin{lemma}
            For every $H\in T-S'$, we have that $w\br{H}\leq\frac{\br{1+\delta}\cdot w\br{T}}{\alpha}$.
        \end{lemma}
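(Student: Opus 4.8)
The plan is to invoke the exactness guarantee of $\FSeparator$ on the rounded instance and then transfer the resulting weight bound back to the original weights, paying an additive rounding error that the choice of $K$ is calibrated to absorb. First I would observe that by Theorem \ref{separator}, the set $S' = \FSeparator\br{T, c, w', \alpha'}$ is a valid $\alpha'$-separator for the instance $\br{T, c, w', \alpha'}$; hence every component $H \in T - S'$ satisfies $w'\br{H} \leq w'\br{T}/\alpha'$. Substituting the definition $\alpha' = \frac{\alpha \cdot K \cdot w'\br{T}}{w\br{T}}$ collapses this to the clean bound
$$
w'\br{H} \leq \frac{w'\br{T}}{\alpha'} = \frac{w\br{T}}{\alpha \cdot K},
$$
which is the only place the exact dynamic program is used.

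Next I would transfer this from the rounded weights $w'$ to the true weights $w$. Since $w'\br{v} = \fl{w\br{v}/K}$, the defining property of the floor gives $K\,w'\br{v} \leq w\br{v} < K\br{w'\br{v}+1}$, so $w\br{v} - K\,w'\br{v} < K$ for every vertex. Summing this over the at most $n$ vertices of a component $H$ yields
$$
w\br{H} < K \cdot w'\br{H} + \spr{V\br{H}} \cdot K \leq K \cdot w'\br{H} + nK,
$$
and combining with the previous bound on $w'\br{H}$ gives $w\br{H} < \frac{w\br{T}}{\alpha} + nK$.

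Finally, I would substitute $K = \frac{\delta \cdot w\br{T}}{n \cdot \alpha}$, whereupon the additive error becomes exactly $nK = \frac{\delta \cdot w\br{T}}{\alpha}$, so that
$$
w\br{H} < \frac{w\br{T}}{\alpha} + \frac{\delta \cdot w\br{T}}{\alpha} = \frac{\br{1+\delta}\cdot w\br{T}}{\alpha},
$$
which is the claimed inequality. The only point requiring care is the accumulation of rounding error in the transfer step: each vertex contributes at most $K$ to the discrepancy $w\br{H} - K\,w'\br{H}$, so over a whole component the total slack is at most $nK$, and the definition of $K$ is chosen precisely so that this worst-case accumulated error equals the permitted slack $\delta \cdot w\br{T}/\alpha$. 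Everything else is direct arithmetic substitution.
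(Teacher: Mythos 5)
Your proof is correct and follows essentially the same argument as the paper: both use the separator guarantee $w'\br{H}\leq w'\br{T}/\alpha'$ on the rounded instance, the floor-rounding bound $w\br{v}\leq K\cdot w'\br{v}+K$ summed over the component, and the substitutions of $\alpha'$ and $K$ to absorb the additive error $nK=\delta\cdot w\br{T}/\alpha$. The only difference is cosmetic ordering of the steps.
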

        
            \begin{proof}
                By definition $ \frac{w\br{v}}{K}\leq w'\br{v}+1$ and therefore, also $w\br{v}\leq K\cdot w'\br{v}+K$. We have:
                \begin{align*}
                \sum_{v\in H}w\br{v}&\leq K\cdot\sum_{v\in H}w'\br{v}+K\cdot n\leq \frac{K\cdot w'\br{T}}{\alpha'}+K\cdot n \\
                &= \frac{w\br{T}}{\alpha} + \frac{\delta \cdot w\br{T}}{\alpha}=\frac{\br{1+\delta}\cdot w\br{T}}{\alpha}
                \end{align*}
                
                where the second inequality is due to the fact that $S'$ is a $\alpha'$-separator for $\br{T, c, w', \alpha'}$ instance and the first equality is by the definition of $\alpha'$ and $K$.
            \end{proof}
            
        Combining the two above lemmas with the fact that $\frac{w'\br{T}}{\alpha'}=\frac{w\br{T}}{K\cdot \alpha}=n/\delta$ we have that the algorithm runs in time $O\br{n^3/\delta^2}$ as required.

\section{Other omitted Proofs}\label{omittedproofs}

\subsection{Proof of Lemma \ref{neighborsPathLemma}}\label{proof_neighborsPathLemma}
    \begin{proof}
        Let $q$ be any query in $D$. There are two cases:
        \begin{enumerate}
            \item $q\in V\br{G-V\br{\mathcal{G}}-N_{G}\br{V\br{\mathcal{G}}}}$. Then, for every $x\in N_{G}\br{V\br{\mathcal{G}}}$ being the target, the answer is the same connected component of $G-q$. Therefore, $q$ has at most one child $u$ in $D$, such that $V\br{D_u}\cap Q \neq \emptyset$.
            \item $q\in N_{G}\br{V\br{\mathcal{G}}}$. After a query to $q$, the situation is the same as in the first case, except when $x=q$. Then the response is $x$ itself, in which case no further queries are needed, and again $q$ has at most one child $u$ in $D$, such that $V\br{D_u}\cap Q \neq \emptyset$.
        \end{enumerate}
    \end{proof}
    
\subsection{Proof of Lemma \ref{lambda_lemma}}\label{proof_lambda_lemma}
            \begin{proof}
                There are two cases:
                \begin{enumerate}
                    \item $w\br{ S_{\mathcal{G}}^*}\geq w\br{\mathcal{G}}/\sqrt{\lambda}$. In this case we take arbitrary partition $\mathcal{A}, \mathcal{B}$ of $\mathcal{H}$. We have:
                    $$w\br{A\cup S_{\mathcal{G}}^*}\cdot w\br{B\cup S_{\mathcal{G}}^*}\geq w\br{ S_{\mathcal{G}}^*}^2 \geq w\br{\mathcal{G}}^2/\lambda.$$
                \item $w\br{ S_{\mathcal{G}}^*} \leq w\br{\mathcal{G}}/\sqrt{\lambda}$.
                For any choice of the partition $\mathcal{A},\mathcal{B}$ of $\mathcal{H}$, we have $\frac{w\br{A\cup B}}{w\br{\mathcal{G}}}\geq 1-\frac{1}{\sqrt{\lambda}}$. We pick $\mathcal{A},\mathcal{B}$ to be a partition of $\mathcal{H}$, such that $w\br{A}\geq w\br{B}\geq \br{\frac{1}{2}-\frac{1}{\sqrt{\lambda}}}\cdot w\br{\mathcal{G}}$ (this is always possible as $\frac{1}{2}-\frac{1}{\sqrt{\lambda}}>0$ and for each $H\in\mathcal{H}$, $w\br{H}\leq w\br{\mathcal{G}}/2$). We have:
                \begin{align*}
                w\br{A\cup S_{\mathcal{G}}^*}\cdot w\br{B\cup S_{\mathcal{G}}^*}&\geq w\br{A}\cdot w\br{B} \\&\geq \br{\br{1-{1}/{\sqrt{\lambda}}}\cdot w\br{\mathcal{G}}-w\br{B}}\cdot w\br{B}\\&\geq  {w\br{\mathcal{G}}^2}/{2}\cdot \br{{1}/{2}-{1}/{\sqrt{\lambda}}} = {w\br{\mathcal{G}}^2}/{\lambda}    
                \end{align*}

                where the third inequality is by using the fact that the concave function $f\br{w\br{B}}=w\br{B}\cdot\br{\br{1-\frac{1}{\sqrt{\lambda}}}\cdot w\br{\mathcal{G}}-w\br{B}}$ reaches its minimum in the interval $\left[\br{\frac{1}{2}-\frac{1}{\sqrt{\lambda}}}\cdot w\br{\mathcal{G}},w\br{\mathcal{G}}/2\right]$ when $w\br{B}=\br{\frac{1}{2}-\frac{1}{\sqrt{\lambda}}}\cdot w\br{\mathcal{G}}$.
                \end{enumerate}
            \end{proof}
            \subsection{Proof of Lemma \ref{up_graphs}}\label{proof_of_up_graphs}
            \begin{proof}
                Fix a value of $\mathcal{G}$ and $k$. Their contribution to the cost is $c\br{S_{\fl{k/2}}^*\cap \mathcal{G}}$. Consider which candidate subgraphs contribute such a term. By definition of $S_{\mathcal{G}}$, we have that $\mathcal{G}$ is the minimal subgraph, such that $w\br{\mathcal{G}}\geq k\geq w\br{A_{\mathcal{G}}}+1 >w\br{H}$, for every $H\in \mathcal{G}-S_{\mathcal{G}}$. This means that if for every $H\in \mathcal{G}-S_{\mathcal{G}}$, $w\br{H}<k$, then $\mathcal{G}$ contributes such a term. Since for all $H_1, H_2\in \mathcal{G}-S_{\mathcal{G}}$, $H_1\cap H_2=\emptyset$, we have that $\br{S_{\fl{k/2}}^*\cap H_1}\cup \br{S_{\fl{k/2}}^*\cap H_2}=\emptyset$, the claim follows by summing over all values of $k$.
            \end{proof}
\end{document}